\newcommand{\BO}[1]{O\left(#1\right)}
\newcommand{\dist}{\mbox{\rm dist}}
\newcommand{\divs}{\mbox{\rm div}}
\newcommand{\rank}{\mbox{\rm rank}}
\newcommand{\MST}{\mbox{\rm MST}}
\newcommand{\TSP}{\mbox{\rm TSP}}
\newcommand{\mindist}{d_{\rm min}}
\newcommand{\maxdist}{d_{\rm max}}
\newcommand{\minlv}{\ell_{\rm min}}
\newcommand{\maxlv}{\ell_{\rm max}}
\newcommand{\pts}{\mbox{\rm pts}}
\newcommand{\proc}[1]{\textsc{#1}}
\newtheorem{fact}{Fact}
\begin{document}
\title{Fully dynamic clustering and diversity maximization in doubling metrics}
%
%
\author{Paolo Pellizzoni\inst{1,2}\and
Andrea Pietracaprina\inst{2} \and
Geppino Pucci\inst{2}}
\authorrunning{F. Author et al.}
%
\institute{Max Planck Institute of Biochemistry\\
Martinsried, Germany \\
\email{pellizzoni@biochem.mpg.de} \and
Dept. of Information Engineering\\
University of Padova, Italy\\
\email{\{andrea.pietracaprina,geppino.pucci\}@unipd.it}}
\maketitle              
\begin{abstract}
We present approximation algorithms for some variants of center-based clustering and related problems in the fully dynamic setting, where the pointset evolves through an arbitrary sequence of insertions and deletions. Specifically, we target the following problems: $k$-center (with and without outliers), matroid-center, and diversity maximization. All algorithms employ a coreset-based strategy and rely on the use of the cover tree data structure, which we crucially augment to maintain, at any time, some additional information enabling the  efficient extraction of the solution for the specific problem. For all of the aforementioned problems our algorithms yield $(\alpha+\epsilon)$-approximations, where $\alpha$ is the best known approximation attainable in polynomial time in the standard off-line setting (except for $k$-center with $z$ outliers where $\alpha = 2$ but we get a $(3+\epsilon)$-approximation) and $\epsilon>0$ is a user-provided accuracy parameter. The analysis of the algorithms is performed in terms of the doubling dimension of the underlying metric. Remarkably, and unlike previous works, the data structure and the running times of the insertion and deletion procedures do not depend in any way on the accuracy parameter $\epsilon$ and, for the two $k$-center variants, on the parameter $k$. For spaces of bounded doubling dimension, the running times are dramatically smaller than those that would be required to compute solutions on the entire pointset from scratch. To the best of our knowledge, ours are the first solutions for the matroid-center and diversity maximization problems in the fully dynamic setting.

\keywords{ dynamic algorithms \and k-center \and outliers \and matroid center \and diversity maximization \and cover tree.}
\end{abstract}

\section{Introduction}
Clustering, that is the task of partitioning a set of points according
to some similarity metric, is one of the fundamental primitives in
unsupervised learning and data mining, with applications in several
fields, such as bioinformatics, computer vision and recommender
systems \cite{Leskovec2020Mining-of-massi}.  A commonly used
formulation is the \emph{k-center} problem, where, given a set of
points from a metric space and a parameter $k$, one seeks to select
$k$ points as cluster centers so that the maximum distance from any
point to its closest center is minimized.  Since finding the optimal
solution is known to be NP-hard \cite{Gonzalez85}, in practice one has
to settle for approximate solutions.

In many practical applications, the set of points to be clustered is
not static, but evolves with time, and the clustering algorithms have
to cope with the insertion and deletion of arbitrary points
efficiently.  For example, in the context of social network analytics,
a large number of new profiles are created every second, therefore
adding new data to the pointset of interest, and the service provider
has to allow for the deletion of user data at any point in time, e.g.,
to comply with GDPR regulations. As another example, if one wishes to
track and cluster the currently watched contents from a video
streaming service, users starting and stopping to watch videos result
in a dynamically changing pointset, with thousands of updates
per second. Because of this ever-increasing need in efficient
algorithms that can cope with non-static data, in recent years there
has been a surge in  research efforts to develop
\textit{fully dynamic} clustering algorithms, whose main focus is on
the efficient handling
of arbitrary insertions and deletions
\cite{Chan2018Fully-dynamic-k, GoranciHLSS21,
  Bateni2023Optimal-Fully-D, Chan2023Fully-Dynamic-k}.
Several variants of the $k$-center problem have been intensely studied in recent
years, tailored at addressing specific limitation of the problem or at
adding additional constraints. Since $k$-center's objective function
involves a maximum, the optimal solution may be heavily affected by
points, dubbed \emph{outliers}, which are markedly distant from the
other ones.  Indeed, especially when data volumes are high, the
probability of observing outliers, such as from noisy or erroneous
measurements, is non negligible, and being able to cope with them is
of paramount importance.  A robust formulation of the problem,
referred to as \emph{$k$-center with $z$ outliers} has been introduced
in \cite{Charikar2001}, where the objective function is allowed to
disregard the $z$ most distant points from the selected centers.
Another studied variant of $k$-center is the \emph{matroid center}
problem \cite{ChenLLW16}, which, given a set of points and a matroid
defined on it, aims at finding a set of centers forming an independent
set of the matroid which minimizes the maximum distance of any point
from the closest center. This variant is extremely flexible and can be
employed to model a wide range of constraints (e.g., fairness) on the
solution \cite{ChiplunkarKR20}.  As for the standard $k$-center, the
matroid center problem admits a robust formulation with $z$ outliers
\cite{ChenLLW16}. Finally, a problem closely related to $k$-center is 
\emph{diversity maximization} \cite{AbbassiMT13}, whose goal is 
somehow dual with respect to the one of $k$-center, as $k$ points 
are to be selected  so that a suitable notion of diversity among them is maximized. 
This problem has a huge impact on information retrieval applications, 
where it is crucial that the information provided to the user be non-redundant. 

A widely-used notion of dimensionality for general metric spaces is the \emph{doubling dimension} 
\cite{Gupta2003Bounded-geometr, GottliebKK14}, which is defined as the minimum value
$D\geq 0$ such that any ball of any radius $r\geq 0$ can be covered by at most $2^D$ balls of radius $r/2$. 
In many real-world instances, the points of interest either belong to low-dimensional spaces
or lie on low-dimensional manifolds of the higher-dimensional metric space they belong to, 
and this property has been extensively exploited to obtain efficient clustering algorithms 
\cite{CeccarelloPPU17, PellizzoniPP20, GoranciHLSS21}. 
In this paper, we tackle the $k$-center problem, along with the
aforementioned variants to it, in the fully dynamic setting, for spaces
of bounded doubling dimension.

\subsection{Related work}
In the sequential setting, $k$-center admits 2-approximate algorithms,
such as Gonzalez's \cite{Gonzalez85}, but no
$(2-\epsilon)$-approximate ones unless P=NP. For the robust
formulation with $z$ outliers, there exists a simple, combinatorial  3-approximation
algorithm \cite{Charikar2001}, which has been extended to
weighted pointsets in  \cite{CeccarelloPP19}. More complex LP-based
2-approximate algorithms have been developed
\cite{Harris2019A-Lottery-Model, Chakrabarty2020The-Non-Uniform}, but
they are less amenable to practical implementations.  For what concerns the
matroid-center problem, \cite{ChenLLW16} provides a 3-approximate
algorithm for the standard formulation and a 7-approximate algorithm
for the formulation with outliers, which has been  improved to a
3-approximation in \cite{Harris2019A-Lottery-Model}.  The diversity
maximization problem admits several instantiations, depending on the
specific diversity function embodied in its objective, which are all
NP-hard but admit polynomial-time $\BO{1}$-approximation
algorithms. For an overview of such methods, we refer to
\cite{AbbassiMT13, CeccarelloPP20} and references therein.

In \cite{Chan2018Fully-dynamic-k}, the authors developed the first
fully dynamic $k$-center algorithm, which is able to return a
$(2+\epsilon)$-approximate solution under arbitrary insertions and
deletions of a non-adaptive adversary in general metrics. The
algorithm is randomized and has an (amortized) update time of $\BO{k^2
  \epsilon^{-1}\log\Delta}$, where $\Delta$ denotes the \emph{aspect
  ratio} of the pointset, namely the ratio between the largest and the
smallest distance between any two points.  This approach has been
recently improved in \cite{Bateni2023Optimal-Fully-D}, where an
algorithm with expected amortized $\BO{(k+\log n) \epsilon^{-1} \log
  \Delta \log n}$ update time is presented, where $n$ is the maximum
number of points at any time. It has to be remarked that both these
works make use of data structures for storing the dynamically changing
pointset, which are statically configured to deal with \emph{fixed}
values of $k$ and $\epsilon$. Answering queries for different
clustering granularities and/or accuracies would, in principle,
require building the data structure from scratch. Also both works use
data structures whose size, with respect to the number of points, is
superlinear by at least a factor $\BO{\log \Delta/\epsilon}$.
Recently, \cite{Chan2023Fully-Dynamic-k} presented a randomized
algorithm that returns a $14+\epsilon$ (bi-criteria) approximate
solution when discarding at most $(1+\lambda)z$ outliers. Again, their
data structure works for fixed $k$ and $\epsilon$, although it does
not depend on $\lambda$ and $z$, and requires superlinear size by a
factor $\BO{\log \Delta/\epsilon}$.

In \cite{GoranciHLSS21}, the authors propose a fully dynamic k-center
algorithm for points belonging to a low-dimensional space based on the
\emph{navigating net} data structure \cite{KrauthgamerL04}, which
affords insertions and deletions in $\BO{(1/\epsilon)^{\BO{D}}
  \log\Delta\log\log\Delta \log\epsilon^{-1}}$ time, where $D$ is the
doubling dimension of the metric space.  While their approach allows
clustering queries for arbitrary values of $k$, the data structure is
built for a specific value of the accuracy parameter $\epsilon$, and
must be rebuilt from scratch if a different accuracy $\epsilon'$ is
sought for.  Moreover, the data structure requires superlinear
space, by a factor
$\BO{(1/\epsilon)^{\BO{D}} \log\Delta\log\log\Delta \log\epsilon^{-1}}$.

Finally, $k$-center clustering has been tackled in other simpler
dynamic frameworks, such as in the insertion-only setting
\cite{Charikar1997Incremental-clu, CeccarelloPPU17,CeccarelloPP19,
  Kale19}, or in the sliding window setting
\cite{Cohen-Addad2016Diameter-and-k-, Borassi2019Better-sliding-,
  PellizzoniPP20, Pellizzoni2022k-Center-Cluste}. In these frameworks
though, the focus is usually on keeping the working memory sublinear
in the number of points rather than achieving fast update times.

\subsection{Our contributions}
In this paper, we present approximation algorithms for center-type
problems in the fully dynamic setting.  The core data structure at the
foundation of our approach is the \emph{cover tree}
\cite{BeygelzimerKL06}, which was originally designed for answering
nearest neighbor queries in low-dimensional metric spaces. We augment
such data structure to maintain the necessary information for solving
the problems at hand, under arbitrary insertion and deletions. The
augmented data structure is used to extract, at any point in time, a
\emph{coreset}, that is, a small subset of representative points that
can be used to obtain solutions of any desired accuracy on the whole
pointset for the problem at hand. Upon query requests, the solution is
returned by running a sequential algorithm on the extracted coreset,
which allows for fast execution times, independently on the current
number of points. Specifically, our contributions are the following.
\begin{enumerate}
\item 
We augment the cover tree data structure to maintain efficiently, in
each of its nodes, information about the subset of points stored in
its subtree, such as its cardinality or a maximal independent set
of the submatroid induced by
such subset (in case a matroid $M$ is defined over the whole pointset).
When no matroid information is maintained, the data structure takes
space linear in the number of points, and allows for insertion and
deletions in $\BO{2^{\BO{D}} \log\Delta}$ time, where $D$ is the
doubling dimension of the metric space, and $\Delta$ is the aspect
ratio of the current set of points. When maintaining matroid
information, the space requirements and the deletion times grow by a
factor at most $\rank(M)$.
\item 
We provide an iterative formulation of the algorithms used to maintain
the cover tree data structure, which, compared to the original
recursive formulation of \cite{BeygelzimerKL06}, affords simpler
correctness proofs and more efficient implementations. In fact, it has
been recently argued that the complexity analysis for the update
operations presented in the original paper has some flaws
\cite{Elkin2022Counterexamples}, which we fix by making it parametric
in $D$ and $\Delta$.
\item 
We devise a fully dynamic $(2+\epsilon)$-approximate algorithm for the
$k$-center problem. Unlike all aforementioned previous works, our data
structure allows to query for solutions for \emph{arbitrary values} of $k$ and
$\epsilon$. Also, the query time exhibits a linear dependency on $k$.
\item 
We devise a
$(3+\epsilon)$ approximation algorithm for the fully dynamic
$k$-center with $z$ outliers problem. Our method allows to choose
$k$, $\epsilon$, and $z$ at query time. We remark that the only
previously available algorithm can only return a $(14+\epsilon)$
bi-criteria solution (i.e. with an additional slackness on the number
of outliers), and requires to fix $k$ and $\epsilon$ beforehand.
\item 
We present the first
fully dynamic algorithm for the matroid center problem. Our algorithm
return a $(3+\epsilon)$ approximate solution.
\item 
We present the first
fully dynamic algorithms for diversity maximization. Our algorithms
return $(\alpha_{\rm div}+\epsilon)$ approximate solutions, where
$\alpha_{\rm div}$ is the best approximation factor achievable by a
sequential algorithm on the variant of the problem at hand.  
\end{enumerate}

An important feature of our algorithms is that they are fully
oblivious to $D$, in the sense that the actual value of this parameter
only influences the analysis but are not needed for the algorithms to
run. This is a very desirable feature, since, in practice, this value
is difficult to estimate.

The rest of the paper is organized as follows.
Section~\ref{sec:preliminaries} provides the formal definition of the
problems and of the theoretical tools we use in the analysis.
Section~\ref{sec:augcovertrees} is dedicated to describing the
structure of the novel augmented cover tree data structure, while
Section~\ref{sec:maintain} details how to maintain such data structure
efficiently.  In Section~\ref{sec:approx}, we present the
approximation algorithms, which rely on the cover tree data
structure. Section~\ref{sec:conclusion} concludes the paper with some
final remarks.

\section{Preliminaries} \label{sec:preliminaries}
This section formally defines the problems studied in this paper, and states some important technical facts.

\subsection{(Robust) $k$-center problem}
Consider a metric space $(U,\dist)$ and a set $S \subseteq U$ of $n$ points. 
For any $p \in U$ and any subset  $C \subseteq S$, 
we use the notation $\dist(p,C) = \min_{q \in C} \dist(p,q)$,
and define the \emph{radius of $C$ with respect to $S$} as
\[
r_C(S) = \max_{p \in S} \dist(p,C).
\] 
For a positive integer $k < n$, the \emph{$k$-center} problem requires
to find a subset $C \subseteq S$ of size at most $k$ which minimizes
$r_C(S)$. The points of the solution $C$ are referred to as
\emph{centers}.  Note that $C$ induces a partition of $S$ into $|C|$
clusters, by assigning each point to its closest center (with ties
broken arbitrarily).  We denote the radius of the optimal solution by
$r_{k}^*(S)$. The popular seminal work by Gonzalez \cite{Gonzalez85}
presents a 2-approximation sequential algorithm for the $k$-center
program, based on the simple, $\BO{nk}$-time greedy strategy that
selects the first center arbitrarily and each subsequent center as the
point with maximum distance from the set of previously selected
ones. The author also shows that, in general metric spaces, it is
impossible to achieve an approximation factor $2-\epsilon$, for any
fixed $\epsilon >0$, unless P = NP.

The algorithms presented in this paper crucially rely on confining the
computation of the solution on a succinct \emph{coreset} $T$
efficiently extracted from the (possibly large) input $S$, which
contains a close enough ``representative'' for each point in $S$. The
quality of a coreset $T$ is captured by the following definition.

\begin{definition} \label{def:coreset}
Given a pointset $S$ and a value $\epsilon >0$, a subset $T \subseteq
S$ is an \emph{$(\epsilon,k)$-coreset for $S$} $($w.r.t. the $k$-center
problem$)$ if $r_T(S) \leq \epsilon r^*_k(S)$.
\end{definition}

In real world applications, large datasets often include noisy points
which, if very distant from all other points, may severely distort the
optimal center selection. To handle these scenarios, the following
important generalization of the $k$-center problem has been
considered. For positive $k,z < n$, the \emph{$k$-center problem with
  $z$ outliers} (also referred to as \emph{robust $(k,z)$-center})
requires to find a subset $C \subseteq S$ of size $k$ minimizing
$r_C(S-Z_C)$, where $Z_C$ is the set of $z$ points in $S$ with the
largest distances from $C$, which are regarded as outliers to be
discarded from the clustering. We denote the radius of the optimal
solution of this problem by $r_{k, z}^*(S)$.  Observe that the
$k$-center problem with $z$ outliers reduces to the $k$-center problem
for $z=0$. Also, it is straightforward to argue that
\begin{equation} 
r_{k+z}^*(S) \le r_{k,z}^*(S).
\label{eq:radius-relation}
\end{equation}

A well known 3-approximation sequential algorithm for the $k$-center
problem with $z$ outliers, which runs in $\BO{kn^2 \log n}$ time was
devised in \cite{Charikar2001}. In this work, we will make use of a
more general formulation of the problem, referred to as \emph{weighted
  $k$-center with $z$ outliers}, where each point $p \in S$ carries a
positive integer weight $w(p)$, and the desired set $C$ of $k$ centers
must minimize $r_C(S-Z_C)$, where $Z_C$ is the set of points with the
largest distances from $C$, of maximum cardinality and aggregate
weight at most $z$. 

\subsection{Matroid center problem}
Another variant of the $k$-center problem 
requires the solution $C$ to satisfy an additional constraint,
specified through a matroid. A
\emph{matroid}~\cite{Oxley06} on a pointset $S$ is a pair $M =
(S,I)$, where $I$ is a family of subsets of $S$, called
\emph{independent sets}, satisfying the following properties: (i) the
empty set is independent; (ii) every subset of an independent set is
independent (\emph{hereditary property}); and (iii) if $A,B \in I$ and
$|A| > |B|$, then there exists $x \in A\setminus B$ such that $B \cup
\{x\} \in I$ (\emph{augmentation property}). An independent set is
\emph{maximal} if it is not properly contained in another independent
set. Given a matroid $M=(S,I)$, the \emph{matroid center problem} on
$M$ requires to determine an independent set $C \in I$ minimizing the
radius $r_C(S)$. We let $r^*(M) = \min_{C \in I} r_C(S)$ to denote the
radius of the optimal solution. The augmentation property ensures
that all maximal independent sets of $M$ have the same size, which is
referred to as \emph{rank} of $M$, denoted by $\rank(M)$.
It is easy to argue that $r^*_{{\rm rank}(M)}(S) \leq r^*(M)$, 
since each solution to the matroid center problem on is also a solution to
the $k$-center problem with $k = \rank(M)$.

As customary in previous works
\cite{AbbassiMT13,Kale19,CeccarelloPP20}, throughout the paper we
assume that a constant-time oracle is available to check the
independence of any subset of $S$.  A combinatorial 3-approximation
algorithm for the matroid center problem in general metrics is
presented in \cite{ChenLLW16}, which requires time
polynomial in $|S|$ and $\rank(M)$.

Some important structural properties of matroids will be used in the
derivations of our results. It is easily seen that for any subset $S'
\subseteq S$, $M'=(S',I')$, where $I' \subseteq I$ is the restriction
of $I$ to subsets of $S'$ is also a matroid.
An \emph{extended augmentation property}, stated in the following fact, 
was proved in \cite[Lemma 2.1]{PietracaprinaPS20}
(see also \cite{Kale19}).

\begin{fact}\label{fact:extAug}
Let $M=(S,I)$ be a matroid. Consider an independent set $A \in I$, a
subset $S' \subseteq S$, and an independent set $B \subseteq S'$ which
is maximal within the submatroid $M'=(S',I')$.  If there exists $y \in
S'\setminus A$ such that $A \cup \{y\} \in I$, then there exists $x
\in B \setminus A$ such that $A \cup \{x\} \in I$.
\end{fact}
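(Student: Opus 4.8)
The plan is to derive the claim from the ordinary augmentation property together with the maximality of $B$ in the submatroid $M'$, handling a possible shortfall in the size of $A$ by first padding it appropriately. First I would observe that we may assume $A \subseteq S'$: indeed, replace $A$ by $A \cap S'$? That is not quite right since the augmentation witness $y$ interacts with all of $A$; instead, the cleaner route is to work with $A$ as given and compare cardinalities of $A$ and $B$ inside $M$.

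The key steps, in order, are as follows. (1) Since $y \in S' \setminus A$ and $A \cup \{y\} \in I$, the set $A$ is \emph{not} maximal among independent subsets of $A \cup S'$; in particular $A \cup S'$ contains an independent set strictly larger than $A$. (2) Let $B' \supseteq B$ be obtained by repeatedly applying the augmentation property within the matroid to grow $B$ using elements of $A$, as long as this is possible: formally, while there exists $a \in A \setminus B'$ with $B' \cup \{a\} \in I$, add such an $a$. Since $B \subseteq S'$ and all added elements come from $A$, we have $B' \subseteq A \cup S'$ and $B' \in I$. (3) Now compare $|B'|$ with $|A|$. If $|B'| > |A|$, the augmentation property applied to the pair $(B', A)$ yields an $x \in B' \setminus A$ with $A \cup \{x\} \in I$; but $B' \setminus A = B \setminus A$ by construction (every element of $B' \setminus B$ lies in $A$), so $x \in B \setminus A$ and we are done. (4) It remains to rule out $|B'| \le |A|$. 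Suppose $|B'| \le |A|$. Then $|(A \cup \{y\})| = |A| + 1 > |B'|$, so by the augmentation property there exists $w \in (A \cup \{y\}) \setminus B'$ with $B' \cup \{w\} \in I$. By the stopping condition defining $B'$, no such $w$ can lie in $A \setminus B'$, so $w = y$, i.e.\ $B' \cup \{y\} \in I$ with $y \in S'$. But then $B' \cup \{y\}$ is an independent subset of $S'$ (since $B' \subseteq A \cup S'$ — wait, $B'$ may contain elements of $A \setminus S'$), which contradicts the maximality of $B$ in $M'$ only if $B' \subseteq S'$.

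I expect the main obstacle to be exactly this last point: elements of $A$ need not lie in $S'$, so the padded set $B'$ need not be a subset of $S'$, and one cannot directly invoke maximality of $B$ within $M'$. The fix is to restructure the argument so that the padding stays inside $S'$: instead of growing $B$ by elements of $A$, grow it by elements of $A \cap S'$ only, obtaining $B'' \subseteq S'$ with $B \subseteq B'' \in I'$; by maximality of $B$ in $M'$ we must have $B'' = B$, i.e.\ \emph{no} element of $(A \cap S') \setminus B$ can be added to $B$. Then run the augmentation comparison between $B$ and $A$ directly: if $|B| > |A|$ we are immediately done as in step (3); otherwise $|A \cup \{y\}| > |B|$ gives $w \in (A \cup \{y\}) \setminus B$ with $B \cup \{w\} \in I$, and since $w \in A \cup \{y\} \subseteq S'$ we have $w \in S' \setminus B$, contradicting maximality of $B$ in $M'$ unless $w \in A \setminus S'$ — which is impossible since $A \cup \{y\} \subseteq S' \cup \{y\}$? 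No: $A$ need not be contained in $S'$. So the genuinely delicate case is when $|A| \ge |B|$ \emph{and} the only augmenting elements for $B$ from $A \cup \{y\}$ lie outside $S'$. Here I would use a counting/exchange argument: among all independent supersets of $B$ of size $|B|$ contained in $A \cup \{y\}$'s span, pick one maximizing the overlap with $S'$, or alternatively invoke the standard fact that two maximal independent subsets of a common set have equal size to force $|B| = |A \cap S'| \ge$ the number of elements of $A$ that $B$ can "cover", and then use that $y$ itself is available in $S'$. Pinning down this exchange argument cleanly — ensuring the witness $x$ ends up in $B \setminus A$ and in $S'$ simultaneously — is the crux; everything else is routine application of the three matroid axioms.
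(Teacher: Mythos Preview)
The paper does not actually prove this fact; it is quoted from \cite[Lemma~2.1]{PietracaprinaPS20}, so there is no in-paper proof to compare against. I will therefore just review your argument on its own.

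Your second attempt (steps (1)--(4)) is essentially a complete proof; you abandoned it one line too early. At the end of step~(4) you correctly obtain $B' \cup \{y\} \in I$, where $B' = B \cup A'$ for some $A' \subseteq A$. You then worry that $B'$ need not lie in $S'$, and give up. But you do not need $B' \cup \{y\} \subseteq S'$: by the hereditary property, $B \cup \{y\} \subseteq B' \cup \{y\}$ is also independent, and \emph{this} set does lie in $S'$ (because $B \subseteq S'$ and $y \in S'$). Hence $B \cup \{y\} \in I'$ with $y \notin B$ (you may assume $y \notin B$, since otherwise $x = y$ already works), contradicting the maximality of $B$ in $M'$. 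That finishes the proof.

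Everything in your ``third attempt'' and the final paragraph about counting/exchange arguments is unnecessary; the obstacle you identified is not real once you drop back from $B'$ to $B$ via heredity.
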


The next fact can be easily proved as a consequence of the 
extended augmentation property.
\begin{fact}\label{fact:combination}
Let $M=(S,I)$ be a matroid, and let $S_1, \ldots, S_h$ be a partition of $S$ into $h$ disjoint subsets. If $A_1\subseteq S_1, \ldots, A_h \subseteq S_h$ are maximal independent sets of the submatroids $M_1= (S_1,I_1), \ldots, M_h=(S_h, I_h)$, then $\cup_{I=1}^h A_i$ contains a maximal independent set of $M$. 
\end{fact}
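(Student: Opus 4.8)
The plan is to argue directly, \emph{without} induction on $h$, by invoking the extended augmentation property (Fact~\ref{fact:extAug}) a single time. First I would let $A$ be a maximal independent set of $M$ among all independent subsets of $\bigcup_{i=1}^h A_i$ --- equivalently, a maximal independent set of the submatroid of $M$ obtained by restricting $I$ to subsets of $\bigcup_{i=1}^h A_i$. Such an $A$ exists because the empty set is independent and $S$ is finite. Since $A \subseteq \bigcup_{i=1}^h A_i$, it suffices to show that $A$ is actually a maximal independent set of the whole matroid $M$; this immediately yields the claim.

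Next I would assume, for contradiction, that $A$ is not maximal in $M$. Then there is an independent set $A' \in I$ with $|A'| > |A|$, and by the augmentation property there exists $y \in A' \setminus A$ with $A \cup \{y\} \in I$. Because $S_1, \ldots, S_h$ partition $S$, the element $y$ lies in some block $S_j$, so in particular $y \in S_j \setminus A$. Now I would apply Fact~\ref{fact:extAug} with the independent set $A$, the subset $S' = S_j$, and the set $B = A_j$, which by hypothesis is maximal within the submatroid $M_j = (S_j, I_j)$: all hypotheses are met, since $A \cup \{y\} \in I$ with $y \in S_j \setminus A$. The fact then produces an element $x \in A_j \setminus A$ with $A \cup \{x\} \in I$.

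This is the desired contradiction: $x \in A_j \subseteq \bigcup_{i=1}^h A_i$ and $x \notin A$, so $A \cup \{x\}$ is an independent subset of $\bigcup_{i=1}^h A_i$ strictly larger than $A$, contradicting the maximality of $A$ among such subsets. Hence $A$ is a maximal independent set of $M$ contained in $\bigcup_{i=1}^h A_i$, which is exactly what Fact~\ref{fact:combination} asserts.

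I do not expect any real obstacle in the calculations; the only conceptual point is to resist trying to prove that $\bigcup_{i=1}^h A_i$ is itself independent (it generally is not), and instead pass to a maximal independent subset of it and observe that maximality \emph{within} $\bigcup_{i=1}^h A_i$ already forces maximality within $M$. The partition structure is used only to locate the augmenting element $y$ inside a single block $S_j$, so that Fact~\ref{fact:extAug} can be applied with $B = A_j$. (An alternative, slightly longer route would induct on $h$, repeatedly exchanging elements of the partial union into a running independent set via Fact~\ref{fact:extAug}, but the one-shot argument above is cleaner.)
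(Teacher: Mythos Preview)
Your argument is correct and is precisely the route the paper intends: the paper does not spell out a proof of Fact~\ref{fact:combination} but simply states that it ``can be easily proved as a consequence of the extended augmentation property,'' and your one-shot application of Fact~\ref{fact:extAug} to a maximal independent subset of $\bigcup_{i=1}^h A_i$ is exactly that. (A minor simplification: once you assume $A$ is not maximal in $M$, you can take $y \notin A$ with $A \cup \{y\} \in I$ directly from the definition of maximality, without passing through a larger $A'$ and the standard augmentation axiom.)
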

 
Different matroids are associated to different semantics of the
constraint on the centers. An important instantiation is the case of
the \emph{partition matroid} $M_P=(S,I_P)$, where each point in $S$ is
associated to one of $m \leq k$ of categories, and $I_P$
consists of all subsets with at most $k_i$ points of the $i$-th
category, with $\sum_{i=0}^m k_i = k$. For instance, 
this matroid can be employed to model fairness constraints
\cite{KleindessnerAM19,ChiplunkarKR20}.

\subsection{Diversity maximization}
Let $\divs: 2^S \rightarrow \mathbb{R}$ be a \emph{diversity function}
that maps any subset $X \subset S$ to some non-negative real number.
For a specific diversity function $\divs$ and a positive integer $k
\leq n$, the goal of the \emph{diversity maximization problem} is to
find a set $C \subseteq S$ of size $k$ that maximizes $\divs(C)$.  We
denote the optimal value of the objective function as $\divs^*_{k}(S)
= \max_{C \subseteq S, |C| = k} \divs(C)$. In this paper, we will
focus on several variants of the problem based on different diversity
functions amply studied in the previous literature, which are
reported in Table~\ref{tab:diversity-notions}.
All of these variants are known to be NP-hard, and
Table~\ref{tab:diversity-notions} also lists the best known 
approximation ratios attainable in polynomial time (see
\cite{AbbassiMT13,CeccarelloPPU17} and references therein).

\newcommand{\definitionTableVerticalSpacing}{\rule{0pt}{10pt}}
\begin{table}[t]
  \centering
  \begin{tabular}{l@{\hskip 5pt} l@{\hskip 1pt} c}
    \toprule
    \shortstack{Problem\\~}
     & \shortstack{Diversity\\measure $\divs(X)$}
     & \shortstack{Sequential\\approx. $\alpha_{\rm div}$}
    \\
    \midrule
    remote-edge
      & $\min_{p, q\in X} d(p, q)$
      & 2 
    \\
    \definitionTableVerticalSpacing%
    remote-clique
      & $\sum_{p, q\in X} d(p, q)$ 
      & 2 
    \\
    \definitionTableVerticalSpacing%
    remote-star 
      & $\min_{c\in X}\sum_{q\in X\setminus\{c\}} d(c, q)$ 
      & 2
    \\
    \definitionTableVerticalSpacing%
    remote-bipartition
     & 
$\displaystyle\min_{Q\subset X, |Q|=\lfloor|X|/2\rfloor}{\textstyle\sum}_{q\in Q, z\in X \setminus Q} d(q, z)$ 
     & 3
    \\
    \definitionTableVerticalSpacing%
    remote-tree 
      & $w(\MST(X))$ 
      & 4
    \\
    \definitionTableVerticalSpacing%
    remote-cycle 
      & $w(\TSP(X))$
      & 3
    \\
    \bottomrule
\hspace*{0.2cm}
  \end{tabular}
  \caption{%
Diversity functions considered in this paper.  $w(\MST(X))$ (resp.,
$w(\TSP(X))$) denotes the minimum weight of a spanning tree (resp.,
Hamiltonian cycle) of the complete graph whose nodes are the points of
$X$ and whose edge weights are the pairwise distances among the
points. The last column lists the best known approximation factor.
} \label{tab:diversity-notions}
\end{table}

\subsection{Doubling dimension}
We will relate the performance of our algorithms to the dimensionality
of the data which, for a general metric space $(U, \dist)$, 
can be captured by the
notion of doubling dimension, reviewed below. For any $p \in U$ 
and $r >0$, the \emph{ball of radius $r$ centered at $p$}, denoted as
$B(p,r)$, is the subset of all points of $U$ at distance at most $r$
from $p$. The \emph{doubling dimension} of $U$ is the minimum value
$D$ such that, for all $p \in U$, any ball $B(p,r)$ is contained in
the union of at most $2^D$ balls of radius $r/2$ centered at points of
$U$. In recent years, the notion of doubling dimension has been used extensively for a variety of applications 
(e.g., see \cite{GottliebKK14,CeccarelloPPU16,CeccarelloPPU17,PellizzoniPP20} 
and references therein).
The following fact is a simple consequence of the definition. 
\begin{fact} \label{fact:doubling}
Let $X$ be a set of points from a metric space of doubling dimension $D$, 
and let $Y \subseteq X$
be such that any two distinct points $a,b \in Y$ have pairwise
distance $\dist(a,b) > r$.  Then 
for every $R \geq  r$ and any point $p \in X$, we have $|B(p,R) \cap Y|
\leq 2^{\lceil \log_2(2R/r) \rceil \cdot D} \leq (4R/r)^D$. If $R/r = 2^i$, then
 the bound can be lowered to $|B(p,R) \cap Y| \leq (2R/r)^D$.
\end{fact}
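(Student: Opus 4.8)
The plan is to apply the defining property of the doubling dimension repeatedly, halving the covering radius at each step, until the covering balls are small enough that each of them can hold at most one point of $Y$.

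First I would fix a point $p \in X$ and a value $R \geq r$, and set $j = \lceil \log_2(2R/r) \rceil$, which is the smallest nonnegative integer satisfying $R/2^j \le r/2$ (note $\log_2(2R/r)\ge 1$ since $R\ge r$, so $j\ge 1$). Applying the doubling property once, $B(p,R)$ is contained in a union of at most $2^D$ balls of radius $R/2$; applying it again to each of those, $B(p,R)$ is contained in a union of at most $2^{2D}$ balls of radius $R/4$; iterating $j$ times, $B(p,R)$ is contained in a union of at most $2^{jD}$ balls, each of radius $R/2^j \le r/2$.

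Next I would observe that any ball $B$ of radius at most $r/2$ contains at most one point of $Y$: if $a, b \in B \cap Y$ were distinct, the triangle inequality would give $\dist(a,b) \le r/2 + r/2 = r$, contradicting the hypothesis that distinct points of $Y$ lie at pairwise distance strictly greater than $r$. Since $B(p,R) \cap Y$ is covered by at most $2^{jD}$ such small balls, each contributing at most one point, we obtain $|B(p,R) \cap Y| \le 2^{jD} = 2^{\lceil \log_2(2R/r)\rceil \cdot D}$, which is the first inequality.

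Finally, for the closed-form bounds I would use $\lceil x \rceil \le x + 1$ to get $2^{\lceil \log_2(2R/r)\rceil \cdot D} \le 2^{(\log_2(2R/r)+1)D} = (4R/r)^D$; and in the special case $R/r = 2^i$ with $i$ a nonnegative integer, the exponent $\log_2(2R/r) = i+1$ is already an integer, so the ceiling is vacuous, $j = i+1$, and $2^{jD} = (2^{i+1})^D = (2R/r)^D$, giving the sharpened bound. The only step requiring any care is the choice of $j$ together with the ceiling/floor arithmetic: one must verify simultaneously that $R/2^j \le r/2$ (so that the "at most one point per ball" argument applies) and that $j$ is not taken larger than necessary (so that the bound is as stated); the rest is a direct iteration of the hypothesis plus the triangle inequality.
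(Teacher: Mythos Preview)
Your argument is correct and is precisely the standard unfolding of the doubling-dimension definition that the paper alludes to when it calls the fact ``a simple consequence of the definition'' (the paper does not spell out a proof). The choice $j=\lceil \log_2(2R/r)\rceil$, the iterated halving, and the triangle-inequality observation that a radius-$r/2$ ball meets $Y$ in at most one point are exactly the intended steps.
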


\section{Augmented Cover Trees} \label{sec:augcovertrees}
Let $S$ be a set of $n$ points from a metric space $(U,\dist)$ of
doubling dimension $D$.  Our algorithms employ an augmented version of
the cover tree data structure, defined in \cite{BeygelzimerKL06},
which, for completeness, we review below.  Conceptually, a \emph{cover
  tree} $T$ for $S$ is an infinite tree, where each node corresponds
to a single point of $S$, while each point of $S$ is associated to one
or more nodes.  The levels of the tree are indexed by integers,
decreasing from the root towards the leaves.  For $\ell \in
(-\infty,+\infty)$, we let $T_{\ell}$ to be the set of nodes of level
$\ell$, and let $\pts(T_{\ell})$ be the points associated with the
nodes of $T_{\ell}$, which are required to be distinct.  For each node
$u \in T$ we maintain: its associated point ($u.\mbox{point}$); a
pointer to its parent ($u.\mbox{parent}$); the list of pointers to its
children ($u.\mbox{children}$); and the level of $u$ in $T$
($u.\mbox{level}$).  For brevity, in what follows, for any two nodes
$u,v \in T$ (resp., any point $p\in U$ and node $u\in T$), we will use
$\dist(u,v)$ (resp., $\dist(p, u))$ to denote
$\dist(u.\mbox{point},v.\mbox{point})$ (resp.,
$\dist(p,u.\mbox{point})$).
The tree must satisfy the following properties. For every level $\ell$:
\begin{enumerate}
\item 
\label{CTp1}
$\pts(T_{\ell}) \subseteq \pts(T_{\ell-1})$;
\item 
\label{CTp2}
for each $u \in T_{\ell}$, $\dist(u, u.{\rm parent}) \leq 2^{\ell+1}$;
\item 
\label{CTp3}
for all $u, v \in T_{\ell}$, $\dist(u, v) > 2^{\ell}$.
\end{enumerate}
For every $p \in S$, let $\ell(p)$ denote the largest index such that
$p \in \pts(T_{\ell(p)})$.  The definition implies that for every
$\ell \leq \ell(p)$, $p \in \pts(T_{\ell})$, and the node $u$ in
$T_{\ell}$ with $u.\mbox{point}=p$ is a \emph{self-child} of itself,
in the sense that $u.\mbox{parent}.\mbox{point} = p$, since for every
other $v \in T_{\ell+1}$, with $v.\mbox{point} \neq p$, $\dist(u,v) >
2^{\ell+1}$.  Let $\mindist$ and $\maxdist$ denote, respectively, the
minimum and maximum distances between two points of $S$, and define
$\Delta = \maxdist/\mindist$ as the \emph{aspect ratio} of $S$. It can
be easily seen that for every $\ell < \log_2 \mindist$,
$\pts(T_{\ell})=S$, and for every $\ell \geq \log_2 \maxdist$,
$|\pts(T_{\ell})|=1$. We define $\minlv$ (resp., $\maxlv$) as the
largest (resp., smallest) index such that $\pts(T_{\ell})=S$ (resp.,
$|\pts(T_{\ell})|=1$), and note that every node $u$ in a level $T_j$
with $j > \maxlv$ or $j \leq \minlv$ has only the self-child in
$u.\mbox{children}$. Therefore, we will consider only the portion of
the tree constituted by the levels $T_{\ell}$ with $\ell \in
[\minlv,\maxlv]$ and will regard the unique node $r\in T_{\maxlv}$ as
the root of the tree.  The above observations imply that the number of
levels in this portion of the tree is $\BO{\log \Delta}$.

Cover trees have been initially proposed as data structures for
efficiently retrieving nearest neighbors. This feature, which will also be 
exploited in our context, crucially relies on the notion of \emph{cover set}. 
For any point $p$ from the metric space $U$, and for every
index  $\ell \leq \maxlv$  the \emph{cover set} 
(\emph{for $p$ at level $\ell$})
$Q^p_{\ell} \subseteq T_{\ell}$ is defined in an inductive way as follows:
\begin{equation} \label{def-coverset}
\begin{array}{lll}
Q^p_{\maxlv} & = & \{ r\} \\[0.2cm]
Q^p_{\ell} & = & \{u \in T_{\ell} \; : \; 
u.\mbox{parent} \in Q^p_{\ell+1} \wedge \dist(u,p) \leq 2^{\ell+1}\}
\;\;\; \mbox{ for } \ell < \maxlv
\end{array}
\end{equation}

The next lemma shows that cover sets for a point $p$ include, at
each level of the tree, all points somewhat ``close'' to $p$, at a scale prescribed by the level.
\begin{lemma} \label{lem:coverdist}
For every point  $p$ from $(U,\dist)$, $\ell \leq \maxlv$,
and $u \in T_{\ell}-Q^p_{\ell}$, 
we have $\dist(p,u) > 2^{\ell+1}$. 
\end{lemma}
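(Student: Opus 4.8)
\emph{Proof plan.} The plan is to prove the statement by induction on the level index $\ell$, working downward from the root level $\maxlv$. For the base case $\ell=\maxlv$, recall that $|\pts(T_{\maxlv})|=1$ by the choice of $\maxlv$, so $T_{\maxlv}=\{r\}$, and by definition $Q^p_{\maxlv}=\{r\}$ as well; hence $T_{\maxlv}-Q^p_{\maxlv}=\emptyset$ and there is nothing to prove.

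For the inductive step, I would fix $\ell<\maxlv$ and assume the claim at level $\ell+1$, namely that $\dist(p,v)>2^{\ell+2}$ for every $v\in T_{\ell+1}-Q^p_{\ell+1}$. Given $u\in T_{\ell}-Q^p_{\ell}$, let $v=u.\mathrm{parent}\in T_{\ell+1}$, and split into two cases according to whether $v$ belongs to the cover set one level up. If $v\notin Q^p_{\ell+1}$, the inductive hypothesis gives $\dist(p,v)>2^{\ell+2}$, while property~\ref{CTp2} gives $\dist(u,v)\le 2^{\ell+1}$, so the triangle inequality yields $\dist(p,u)\ge \dist(p,v)-\dist(u,v)>2^{\ell+2}-2^{\ell+1}=2^{\ell+1}$, as required. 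If instead $v\in Q^p_{\ell+1}$, then since $u\notin Q^p_{\ell}$ while $u.\mathrm{parent}=v\in Q^p_{\ell+1}$, the inductive definition~\eqref{def-coverset} of the cover set forces the alternative $\dist(u,p)>2^{\ell+1}$, which is exactly the claim. This closes the induction.

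I do not expect a genuine obstacle here: the argument is a short structural induction that uses only property~\ref{CTp2} and the defining recursion for $Q^p_{\ell}$. The only point requiring a bit of care is keeping the powers of $2$ consistent across levels — property~\ref{CTp2} contributes a slack of $2^{\ell+1}$ at level $\ell$, the inductive hypothesis at level $\ell+1$ supplies a lower bound of $2^{\ell+2}$, and these are tuned precisely so that their difference is the desired $2^{\ell+1}$, matching the threshold used in the definition of the cover set at level $\ell$.
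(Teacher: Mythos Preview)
Your argument is correct and follows essentially the same route as the paper: backward induction on $\ell$ with a vacuous base case at $\maxlv$, and an inductive step that splits on whether $u.\mathrm{parent}$ lies in $Q^p_{\ell+1}$, using the defining recursion in one case and property~\ref{CTp2} plus the triangle inequality in the other.
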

\begin{proof}
The proof proceeds by (backward) induction on $\ell$. For the  base case $\ell =\maxlv$,  the statement holds vacuously.
Assume now that the statement holds at level $\ell +1$, and let $u \in T_{\ell}-Q^p_{\ell}$. 
The statement immediately follows when  
$u.\mbox{parent} \in Q^p_{\ell+1}$. Otherwise, since  $u.\mbox{parent} \in Q^p_{\ell+1} -T_{\ell+1}$,  by the inductive hypothesis it must be $\dist(p, u.\mbox{parent}) > 2^{\ell+2}$, whence $\dist(p,u) \geq \dist(p, u.\mbox{parent}) - \dist(u, u.\mbox{parent})  > 2^{\ell+1}$.
\end{proof}
We now relate the size of the cover sets to the doubling dimension $D$ of $(U,\dist)$.
\begin{lemma} \label{lem:coversize}
For every point  $p\in U$ and $\ell \leq \maxlv$,
we have that 
\begin{enumerate}
\item
$|Q^p_{\ell}| \leq 4^D$.
\item 
$\sum_{u\in Q^p_{\ell}} |u.\mbox{\rm children}| \leq 12^D$.
\end{enumerate}
\end{lemma}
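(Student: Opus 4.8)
The plan is to establish both bounds through one and the same packing argument: for each quantity I identify a ball that contains all the tree points involved, record a lower bound on their pairwise distances coming from the separation property~\ref{CTp3} of the cover tree, and then conclude with Fact~\ref{fact:doubling}, applied with ambient set $S\cup\{p\}$ (a subset of $U$, hence of doubling dimension at most $D$) and with $Y$ taken to be the relevant set of points. For Statement~1, the definition of cover set gives $\dist(u,p)\le 2^{\ell+1}$ for every $u\in Q^p_\ell$, so $\pts(Q^p_\ell)\subseteq B(p,2^{\ell+1})$; since $Q^p_\ell\subseteq T_\ell$, property~\ref{CTp3} guarantees that these points are distinct and pairwise at distance greater than $2^\ell$. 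Hence Fact~\ref{fact:doubling} with $Y=\pts(Q^p_\ell)$, $r=2^\ell$, $R=2^{\ell+1}$ applies, and because $R/r=2$ is a power of two it yields the sharpened bound $|Q^p_\ell|=|\pts(Q^p_\ell)|\le(2R/r)^D=4^D$.

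For Statement~2, observe first that the children lists of distinct nodes of $Q^p_\ell$ are pairwise disjoint (each node has a unique parent), so $\sum_{u\in Q^p_\ell}|u.\mbox{\rm children}|$ equals the number of nodes of $T_{\ell-1}$ whose parent lies in $Q^p_\ell$; by property~\ref{CTp3} applied one level down, these nodes carry distinct points that are pairwise at distance greater than $2^{\ell-1}$. If $v\in T_{\ell-1}$ is a child of $u\in Q^p_\ell$, then $\dist(v,u)\le 2^\ell$ by the covering property~\ref{CTp2} (instantiated at level $\ell-1$) and $\dist(u,p)\le 2^{\ell+1}$, so the triangle inequality gives $\dist(v,p)\le 3\cdot 2^\ell$. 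Therefore all such children lie in $B(p,3\cdot 2^\ell)$, and Fact~\ref{fact:doubling} with $r=2^{\ell-1}$ and $R=3\cdot 2^\ell=6\cdot 2^{\ell-1}$ then bounds their total number by the claimed $12^D$.

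The one step that requires care, and which I expect to be the main obstacle, is squeezing out exactly the constant $12^D$ in Statement~2: since the ratio $R/r=6$ is not a power of two, a blunt application of the generic form of Fact~\ref{fact:doubling} would only return $2^{\lceil\log_2 12\rceil D}=16^D$, so one has to account for the covering of $B(p,3\cdot 2^\ell)$ somewhat more carefully --- for instance, by separating the children that fall inside $B(p,2^\ell)$, which are precisely the nodes of $Q^p_{\ell-1}$ and hence number at most $4^D$ by Statement~1 applied at level $\ell-1$, from the remaining ``far'' children and bounding the two contributions individually. Everything else in the proof reduces to the defining properties of the cover tree together with a single invocation of the doubling condition.
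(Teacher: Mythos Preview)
Your argument is exactly the paper's: for Statement~1 the paper places $\pts(Q^p_\ell)$ inside $B(p,2^{\ell+1})$, uses the level-$\ell$ separation $>2^\ell$, and invokes Fact~\ref{fact:doubling} with $R/r=2$ to get $4^D$; for Statement~2 it bounds every child by $\dist(u',p)\le 2^\ell+2^{\ell+1}=3\cdot 2^\ell$, uses the level-$(\ell-1)$ separation $>2^{\ell-1}$, and invokes Fact~\ref{fact:doubling} again. So there is no methodological difference.

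Where you are more careful than the paper is in the constant for Statement~2. The paper simply writes $(2\cdot 3\cdot 2^{\ell}/2^{\ell-1})^D=12^D$, i.e.\ it applies the sharpened $(2R/r)^D$ form of Fact~\ref{fact:doubling} even though $R/r=6$ is not a power of two; strictly speaking, Fact~\ref{fact:doubling} as stated only yields $2^{\lceil\log_2 12\rceil D}=16^D$ in that case. You are right to flag this. However, your proposed fix does not close the gap either: separating the ``near'' children (those in $B(p,2^\ell)$, i.e.\ $Q^p_{\ell-1}$, at most $4^D$ many) from the ``far'' ones still leaves you to bound the far children, and these lie in $B(p,3\cdot 2^\ell)$ with pairwise separation $>2^{\ell-1}$, so any packing bound you can extract from Fact~\ref{fact:doubling} for them is no better than the $16^D$ you started with---the split only makes the total worse. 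In short, the paper's $12^D$ is obtained by tacitly ignoring the power-of-two proviso in Fact~\ref{fact:doubling}; with the fact as stated one gets $16^D$, which is all the subsequent analysis actually needs (the bounds are of the form $c^D$ anyway).
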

\begin{proof}
For each $u_1, u_2 \in T_\ell$, we have that $\dist(u_1, u_2) >
2^\ell$. Also, $\pts(Q^p_{\ell})\subseteq B(p, 2^{\ell+1})\cap
T_\ell$. Then, by Fact~\ref{fact:doubling}, it follows that
$|Q^p_{\ell}| \leq (2 \cdot 2^{\ell+1} / 2^\ell)^D = 4^D$. Moreover,
for each $u \in Q^p_\ell$ and each $u' \in u.{\rm children}$, we have
that $\dist(u', p) \leq \dist(u', u) + \dist(u, p) \leq 2^{\ell} + 2^{\ell+1}
= 3\cdot 2^{\ell}$, whence $u'.\mbox{point} \in B(p, 3\cdot 2^\ell) \cap \pts(T_{\ell -1})$.
Then, again by Fact~\ref{fact:doubling}, it follows that $|\{ u' \in
u.{\rm children } $ s.t. $ u \in Q^p_{\ell} \}| \leq (2\cdot 3 \cdot
2^{\ell} / 2^{\ell-1})^D = 12^D$.
\end{proof}

We represent the entire tree  $T$  by  a pointer to the root, and recall that $r.\mbox{level} = \maxlv$. Note that this naive representation of $T$,
referred to as \emph{implicit representation} in
\cite{BeygelzimerKL06}, requires $\BO{n \log\Delta}$ space.
In order to save space, a more compact representation $T$, referred to as
\emph{explicit representation} in \cite{BeygelzimerKL06}, is used,
where chains of 1-child nodes, which correspond to instances of
the same point, are coalesced. More precisely, this representation
only maintains explicitly nodes that have children other than the
self-child.
Therefore, any maximal chain of nodes in $T_i, T_{i-1}, \ldots,
T_{i-j}$ all associated to the same point $p$ and such that each of
them (up to Level $T_{i-j+1}$) is only parent to the self-child, is represented
through the explicit node $u\in T_i$  with
$u.\mbox{children}$ set to the list of children of the
node of the chain in $T_{i-j}$. It is easy to argue that this compact representation
takes only $\BO{n}$ space.  
It is important to observe that, given a point $p\in U$, the cover
sets $Q^p_{\ell}$ for $\ell \leq \maxlv$ can be constructed in a
top-down fashion from the explicit representation of $T$ by simply
recreating the contracted chains of implicit nodes. In our algorithms,
in each cover set $Q^p_\ell$, an implicit node $v$ will be represented
by the explicit ancestor $u$ associated to the head of the chain
containing $v$. By Lemma~\ref{lem:coversize}, $Q^p_{\ell}$ can be constructed
from $Q^p_{\ell+1}$ in $\BO{12^D}$ time.

\subsection{Augmenting the basic structure}
In this subsection, we show how to augment the cover tree data
structure so to maintain at its nodes two additional data fields,
which will be exploited in the target applications presented
later. Suppose that a matroid $M=(U,I)$ is defined over the universe
$U$.  An \emph{augmented cover tree} $T$ for $S$ (with respect to $M$)
is a cover tree such that each node $U$ stores the following two
additional fields: a positive weight $u.\mbox{\em weight}=|S_u|$, where
$S_u$ is the subset of points of $S$ associated with nodes in the subtree rooted at node
$u$, and a set of points $u.\mbox{\em mis}$, which is a maximal
independent set of the submatroid $M_u =(S_u,I_u)$ 
where  $I_u$ is the restriction of $I$ to the subsets of $S_u$.  The
size of $T$ becomes $\BO{n\cdot m}$, where $m$ denotes the size
of a maximum independent set of $S$.  For the applications where the
matroid information is not needed, the fields $u.\mbox{mis}$ will
be always set to null (as if $M=(U,\emptyset)$), and, in this case,
the size of $T$ will be $\BO{n}$.

\section{Dynamic maintenance of augmented cover trees} \label{sec:maintain}
Let $T$ be an augmented cover tree for the set $S$ of $n$ points.  In
this section, we show how to update $T$ efficiently when a point
$p$ is added to or deleted from $S$.

\subsection{Insertion} \label{sec:insertion}
Let $p$ be a new point to be inserted in $T$.  The insertion of $p$ is
accomplished as follows.  First, if $p$ is very far from the root $r$,
namely $\dist(p,r) > 2^{\maxlv}$, then both $\maxlv$ and
$r.\mbox{level}$ are increased to $\lfloor \log_2 \dist(p,r)
\rfloor$. Then, an explicit node $u$ is created with
$u.\mbox{point}=p$, $u.\mbox{weight}=1$ and $u.\mbox{mis}=\{p\}$.  In
order to determine the level $\ell(p)$ where $u$ must be placed, all
cover sets $Q^p_{\ell}$ are computed, as described above, for every
$\ell \in [\bar{\ell},\maxlv]$, where $\bar{\ell}$ is the largest
index in $(-\infty,\maxlv]$ such that $Q^p_{\bar{\ell}} =
  \emptyset$. Note that such empty cover set must exist and it is easy
  to see that $\bar{\ell} \geq \lceil \log_2 \dist(p,S) \rceil-2$.
  Then, $u.\mbox{level}$ is set to the smallest index $\ell(p) \geq
  \bar{\ell}$ such that $\dist(p,\pts(Q^p_{\ell(p)})) > 2^{\ell(p)}$
  and $\dist(p,\pts(Q^p_{\ell(p)+1})) \leq 2 ^{\ell(p)+1}$. At this
  point, an arbitrary node $v \in Q^p_{\ell(p)+1}$ such that
  $\dist(p,v) \leq 2^{\ell(p)+1}$ (which must exists for sure) is
  determined.  Let $q = v.\mbox{point}$.  If $v$ has no explicit
  self-child at level $\ell(p)$, a new node $w$ with
  $w.\mbox{point}=q$, $w.\mbox{level}=\ell(p)$, and
  $w.\mbox{children}=v.\mbox{children}$, is created, and
  $v.\mbox{children}$ is set to $\{u,w\}$. If instead such an explicit
  self-child $w$ of $v$ exists at level $\ell(p)$, then $u$ is simply
  added as a further child of $v$. Finally, the path from the newly
  added node $u$ to $r$ is traversed, and for every ancestor $v$ of
  $u$, $v.\mbox{weight}$ is increased by 1 and $p$ is added to the
  independent set $v.\mbox{mis}$, if $v.\mbox{mis} \cup \{p\}$ is
  still an independent set.
Algorithm $\proc{Insert}(p,T)$ in the appendix, provides the pseudocode for the above 
procedure.

\begin{theorem}\label {thm:insert}
Let $T$ be an augmented cover tree for a set $S$ of
$n$ points, with respect to a matroid $M=(U,I)$. 
The insertion algorithm described above yields an augmented cover tree
for $S \cup \{p\}$ in time $\BO{12^D \log \Delta}$
where $D$ is the doubling dimension of the metric space 
and $\Delta$ is the
aspect ratio of $S$.
\end{theorem}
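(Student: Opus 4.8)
The plan is to prove the two claims in turn: that the procedure outputs a valid augmented cover tree for $S\cup\{p\}$, and that it runs in $\BO{12^D\log\Delta}$ time. I would start by pinning down the two quantities the algorithm computes, $\bar\ell$ and $\ell(p)$. Combining the inductive definition of the cover sets with Lemma~\ref{lem:coverdist} yields the exact identity $Q^p_\ell=\{u\in T_\ell:\dist(p,u)\le 2^{\ell+1}\}$ for every $\ell\le\maxlv$; hence, writing $f(\ell)=\dist(p,\pts(T_\ell))$, the first selection predicate ``$\dist(p,\pts(Q^p_\ell))>2^\ell$'' is equivalent to ``$f(\ell)>2^\ell$'', and ``$Q^p_\ell=\emptyset$'' is equivalent to ``$f(\ell)>2^{\ell+1}$''. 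The crucial structural fact is that $g(\ell):=f(\ell)-2^{\ell+1}$ is \emph{non-increasing} in $\ell$: picking a point of $\pts(T_\ell)$ that realizes $f(\ell)$ and following its parent edge (of length $\le 2^{\ell+1}$ by property~\ref{CTp2}) gives $f(\ell+1)\le f(\ell)+2^{\ell+1}$, i.e.\ $g(\ell+1)\le g(\ell)$. Since $Q^p_{\maxlv}=\{r\}\ne\emptyset$ and, after the possible root lift, $\dist(p,r)\le 2^{\maxlv}$, the set $\{\ell:Q^p_\ell=\emptyset\}$ is exactly $(-\infty,\bar\ell]$ for a well-defined integer $\bar\ell\le\maxlv-1$, and one recovers $\bar\ell\ge\lceil\log_2\dist(p,S)\rceil-2$, so only $\BO{\log\Delta}$ cover sets are ever built. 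Finally, because $Q^p_{\ell+1}\ne\emptyset$ for $\ell\ge\bar\ell$, the second predicate ``$\dist(p,\pts(Q^p_{\ell+1}))\le 2^{\ell+1}$'' is exactly the negation of ``$f(\ell+1)>2^{\ell+1}$'', so $\ell(p)$ is one less than the smallest level $>\bar\ell$ at which ``$f(\ell)>2^\ell$'' fails — this level exists and is at most $\maxlv$ since $f(\maxlv)=\dist(p,r)\le 2^{\maxlv}$ — and in particular $f(\ell)>2^\ell$ holds for \emph{every} $\ell\le\ell(p)$ (using $g(\ell)>0$ for $\ell<\bar\ell$).

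Given this, I would verify the three cover-tree properties for $S\cup\{p\}$. Property~\ref{CTp1} is immediate: the only new point $p$ is inserted at exactly the levels $\ell\le\ell(p)$. For property~\ref{CTp2} it suffices that every newly created parent--child edge has length $\le 2^{\ell(p)+1}$; the edge from $u$ ($p$'s node at level $\ell(p)$) to $v$ — or to the split node $w$ carrying the same point $q=v.\mbox{point}$ — has length $\dist(p,q)\le 2^{\ell(p)+1}$ by the choice of $v\in Q^p_{\ell(p)+1}$ (such a $v$ exists precisely because $f(\ell(p)+1)\le 2^{\ell(p)+1}$), the new self-edges have length $0$, and the re-parented former children of $q$ retain their previously valid edges. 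Property~\ref{CTp3} reduces, since existing pairs are untouched and $p$ occurs only at levels $\le\ell(p)$, to $\dist(p,\pts(T_\ell))=f(\ell)>2^\ell$ for all $\ell\le\ell(p)$, which is exactly what was established above. Orthogonally, one checks that the explicit/implicit bookkeeping is faithful: splitting $q$'s contracted chain at level $\ell(p)$ (new node $w$ with $w.\mbox{children}=v.\mbox{children}$ and $v.\mbox{children}=\{u,w\}$), or reusing an existing explicit self-child of $v$ at level $\ell(p)$, indeed makes $u$ a child, in the implicit tree, of $q$'s self-node at level $\ell(p)+1$; this is routine.

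For the augmented fields, the nodes whose subtree point set changes are precisely those on the root path of $u$ (equivalently, the explicit ancestors of $u$), and each such subtree gains exactly the point $p$; so incrementing $v.\mbox{weight}$ by $1$ is correct, and, invoking only the hereditary and maximality properties of matroids, replacing $v.\mbox{mis}$ by $v.\mbox{mis}\cup\{p\}$ whenever the latter is independent (and leaving it unchanged otherwise) preserves the invariant that $v.\mbox{mis}$ is a maximal independent set of the updated $M_v$. The time bound then follows by accounting: the root lift is $\BO{1}$; the $\BO{\log\Delta}$ cover sets are built one from the next in $\BO{12^D}$ each by Lemma~\ref{lem:coversize}, with an extra $\BO{4^D}$ per level to evaluate the selection predicates; finding $v$ costs $\BO{4^D}$; creating $u$ and $w$ and rewiring the (size-$\BO{12^D}$) children lists costs $\BO{12^D}$; and traversing the $\BO{\log\Delta}$-long explicit root path of $u$, doing $\BO{1}$ work per node (a weight update plus one constant-time independence query), costs $\BO{\log\Delta}$ — so $\BO{12^D\log\Delta}$ overall.

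I expect the main obstacle to be property~\ref{CTp3} at the levels strictly below $\ell(p)$ — precisely the point at which the original recursive analysis of \cite{BeygelzimerKL06} is delicate. Its resolution hinges on the (non-obvious) monotonicity of $g(\ell)=f(\ell)-2^{\ell+1}$, which is what makes $\bar\ell$ a clean threshold and forces ``$f(\ell)>2^\ell$'' to hold on the whole prefix of levels up to $\ell(p)$; the explicit/implicit chain bookkeeping, by contrast, is tedious but conceptually straightforward.
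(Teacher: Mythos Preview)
Your proof is correct and follows the paper's overall strategy: verify the three cover-tree properties via the cover sets $Q^p_\ell$ and Lemma~\ref{lem:coverdist}, then bound the running time via Lemma~\ref{lem:coversize} and the $\BO{\log\Delta}$ depth. Where the paper dispatches correctness with ``it is easy to see,'' you supply the missing rigor: the identity $Q^p_\ell=\{u\in T_\ell:\dist(p,u)\le 2^{\ell+1}\}$, the monotonicity of $g(\ell)=f(\ell)-2^{\ell+1}$, and the resulting characterization of $\ell(p)$ are precisely what is needed to certify Property~\ref{CTp3} at \emph{every} level $\ell\le\ell(p)$, not just at $\ell(p)$ itself. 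One minor divergence worth noting: for the $.\mbox{mis}$ update the paper appeals to Fact~\ref{fact:combination}, whereas you argue directly from hereditariness and maximality; your route is the more natural one for insertion, since the actual update rule is not ``extract a maximal independent set from the union of children's sets'' but the simpler ``add $p$ if the result is still independent.''
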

\begin{proof}
It is easy to see that the insertion algorithm
enforces, for every level $\ell$,
Properties \ref{CTp1}, \ref{CTp2}, \ref{CTp3} of the definition of
cover tree, restricted to the nodes in $Q^p_{\ell}$ plus the new node
created for $p$ (for $\ell = \ell(p)$).
These properties immediately extend to
the entire level $\ell$ by virtue of Lemma~\ref{lem:coverdist}.
For what concerns the update of the $.\mbox{weight}$ and
$.\mbox{mis}$ fields, correctness is trivially argued for the $.\mbox{weight}$
fields,  while Fact~\ref{fact:combination} ensures
correctness of the updates of the $.\mbox{mis}$ fields. 
 can be argued as for the insertion algorithm. 
The complexity
bounds follow by observing that there are $\BO{\log \Delta}$ levels in
the explicit representation of $T$ and that, at each such level
$\ell$, the algorithm performs work linear in the number of children
of $Q^p_{\ell}$, which are at most $12^D$, by
Lemma~\ref{lem:coversize}.
\end{proof}

\subsection{Deletion}
Let $p\in S$ be the point to be removed. We assume that $p$ is not the
only point in $S$, otherwise the removal is trivial.  The deletion of
$p$ is accomplished as follows.  In the first, top-down phase, all
cover sets $Q^p_{\ell}$ are computed, for every $\ell \in
[\bar{\ell},\maxlv]$, where $\bar{\ell} \leq \maxlv$ is the 
level of the  leaf node  corresponding to $p$ in the explicit tree. 
Also, a list $R_{\bar{\ell}}$ of explicit
  nodes at level $\bar{\ell}$ to be relocated is created and
  initialized to the empty list.  In the second, bottom-up phase, the
  following operations are performed iteratively, for every $\ell =
  \bar{\ell}, \bar{\ell}+1, \ldots, \maxlv-1$ (the case $\ell=\maxlv$
  will be treated separately).
\begin{itemize}
\item
If $Q^p_{\ell}$ contains a node $u$ with $u.\mbox{point}=p$ and
$u.\mbox{level}=\ell$, the following additional
operations are performed.  Let $v$ be the parent of $u$, and observe
that all children of $v$ must also be explicit nodes at level $\ell$.
If $u$ is the self child of $v$ (i.e., $v.\mbox{point}=u.\mbox{point}=p$)
$u$ is removed from $T$, 
$u$'s siblings are detached from $v$ 
and added to $R_{\ell}$ ($v$ will be later removed
at iteration $v.\mbox{level}$). If instead $u$ is not the self
child of $v$, but it is the only child of $v$ besides the self-child,
$u$ is removed from $T$ and 
$v$ and its self-child are merged together in the explicit tree.
\item
An empty list $R_{\ell+1}$ is created. Then, 
$R_{\ell}$ is scanned sequentially, and, for every $w \in R_{\ell}$,
a node $w' \in Q^p_{\ell+1} \cup R_{\ell+1}$  is searched for such that
$d(w,w') \leq 2^{\ell+1}$. If no such node exists,
then $w$ is added to $R_{\ell+1}$, raising $w.\mbox{level}$ to
$\ell+1$. Otherwise, if $w'$ is found, it becomes parent of $w$ as follows. 
If $w'$ is internal and its children  
are at level $\ell$, $w$ is simply added as a further child. 
Otherwise, 
a new explicit node $z$, is created with
$z.\mbox{point}= w'.\mbox{point}$, $z.\mbox{level}=\ell$,
and  $z.\mbox{children}=w'.\mbox{children}$, and
$w'.\mbox{children}$ is set to $\{z,w\}$. 
\item
For all nodes in $w \in Q^p_{\ell+1} \cup R_{\ell+1}$ with $w.\mbox{level} =\ell+1$, their $.\mbox{weight}$  and $.\mbox{mis}$  fields are updated based on the values of the
corresponding fields of their children. The update of the $.\mbox{weight}$ fields is 
straightforward, while, based on Fact~\ref{fact:combination}, the update of the $.\mbox{mis}$  field of one such node $w$ can be accomplished by computing a maximal independent set 
in the union of the elements of the $.\mbox{mis}$ fields of $w$'s children. 
\end{itemize}
Once the above operations are performed up to $\ell = \maxlv-1$, 
the following cases arise for level $\maxlv$. Consider first the case
$p \neq r.\mbox{point}$. If $R_{\maxlv} = \emptyset$, we simply update $\maxlv$ and $r.\mbox{level}$
to 1 plus the level of the children of $r$; otherwise 
($R_{\maxlv} \neq \emptyset$), a new root $r_{\rm new}$ is created with
$r_{\rm new}.\mbox{point}=r.\mbox{point}$, 
$r_{\rm new}.\mbox{level}=\maxlv+1$,
and $r_{\rm new}.\mbox{children} = \{r\} \cup R_{\maxlv}$. Also,
$\maxlv$ is incremented by 1. Instead, in case $p = r.\mbox{point}$,
it is easy to see that $R_{\maxlv}$ cannot be empty, since it must contain
for sure the children of $r$. If $R_{\maxlv}$ contains
only one node, say $v$, then $v$ becomes the new root, and we update, 
$\maxlv$ and $v.\mbox{level}$
to 1 plus the level of the children of $v$; otherwise ($|R_{\maxlv}|>1$),
we select an arbitrary $v \in R_{\maxlv}$, 
a new root $r_{\rm new}$ is created with
$r_{\rm new}.\mbox{point}=v.\mbox{point}$, 
$r_{\rm new}.\mbox{level}=\maxlv+1$,
and $r_{\rm new}.\mbox{children} = R_{\maxlv}$.
Also,
$\maxlv$ is incremented by 1.
Finally, whenever level $\maxlv$ is incremented by 1 and, consequently,
a new root node is created, for this node the 
$.\mbox{weight}$ and $.\mbox{mis}$ fields are updated based on the values of the
corresponding fields of their children, as described above. 
Algorithm $\proc{Delete}(p,T)$ in the appendix, provides the pseudocode for the
above procedure.

\begin{theorem}
Let $T$ be an augmented cover tree for a set $S$ of $n$ points, with
respect to a matroid $M=(U,I)$. The deletion algorithm described above
yields an augmented cover tree for $S-\{p\}$ in time $\BO{(16^D+12^D\rank(M)) \log \Delta}$ where $D$ is the doubling dimension of the metric space and
$\Delta$ is the aspect ratio of $S$.
\end{theorem}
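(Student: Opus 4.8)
I would split the argument into two parts: (i) the output is a valid augmented cover tree for $S-\{p\}$, and (ii) the running time is $\BO{(16^D+12^D\rank(M))\log\Delta}$. Throughout, the guiding idea is that Lemma~\ref{lem:coverdist} confines all structural changes to the nodes of the cover sets $Q^p_\ell$ together with the relocated nodes collected in the lists $R_\ell$: every $u\in T_\ell-Q^p_\ell$ has $\dist(p,u)>2^{\ell+1}$, so $p$ does not occur in its subtree, the procedure never touches $u$, and Properties \ref{CTp1}--\ref{CTp3} keep holding for it. Hence it suffices to verify the cover-tree properties on the $Q^p_\ell$'s and the relocated nodes, and the augmented fields only need to be refreshed along former occurrences of $p$ and at relocated nodes.

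\textbf{Structural correctness.} The heart of the proof is an inductive invariant carried through the bottom-up phase: for every level $\ell$, (a) every $w\in R_\ell$ satisfies $\dist(w,p)\le 2^{\ell+1}$, and (b) the nodes the procedure keeps at level $\ell$ are pairwise more than $2^\ell$ apart. Part (a) holds at the base level $\bar\ell$ because the initial members of $R_{\bar\ell}$ are children, at level $\bar\ell$, of an explicit node with point $p$; it propagates because a node entering $R_{\ell+1}$ is either a sibling of a removed $p$-node at level $\ell+1$ (hence within $2^{\ell+2}$ of $p$) or a member of $R_\ell$ that cannot be attached within distance $2^{\ell+1}$ to $Q^p_{\ell+1}\cup R_{\ell+1}$ (hence within $2^{\ell+1}\le 2^{\ell+2}$ of $p$ by (a) at level $\ell$). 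For part (b), pre-existing level-$\ell$ nodes keep their original mutual distances; a node promoted into level $\ell$ was, at the previous iteration, screened for distance $>2^\ell$ against every node of $Q^p_\ell$ --- which already contains the relocated siblings --- and against every earlier promoted node, and it is at distance $>2^{\ell+1}-2^\ell=2^\ell$ from any $x\in T_\ell-Q^p_\ell$ by Lemma~\ref{lem:coverdist} together with (a). Given the invariant, Property \ref{CTp2} follows (a relocated node either attaches within distance $2^{\ell+1}$ or is lifted a level and retried), Property \ref{CTp3} follows from (b) and, for a promoted node $w$ at level $\ell+1$ versus $x\in T_{\ell+1}-Q^p_{\ell+1}$, from $\dist(w,x)\ge\dist(p,x)-\dist(p,w)>2^{\ell+2}-2^{\ell+1}=2^{\ell+1}$; and Property \ref{CTp1} holds because $p$ disappears from all levels while every relocated point retains its self-chain down to the leaves. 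The level-$\maxlv$ case analysis (according to whether $p=r.\mbox{point}$ and whether $R_{\maxlv}$ is empty) is checked directly against the same properties, as is the maintenance of the explicit (coalesced-chain) representation.

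\textbf{Augmented fields and running time.} Since the children's subtrees partition $S_u$, the bottom-up refresh of $u.\mbox{weight}$ is immediate, and for $u.\mbox{mis}$ Fact~\ref{fact:combination} guarantees that the union of the children's maximal independent sets contains a maximal independent set of $M_u$, so any maximal independent subset of that union --- exactly what the algorithm keeps --- is a maximal independent set of $M_u$. For the time bound, the explicit tree has $\BO{\log\Delta}$ relevant levels, and at level $\ell$: building $Q^p_\ell$ from $Q^p_{\ell+1}$ and detaching a removed $p$-node cost $\BO{12^D}$ by Lemma~\ref{lem:coversize}; the relocation step scans $Q^p_{\ell+1}\cup R_{\ell+1}$ for each node of $R_\ell$, and since invariant (a)--(b) places $R_\ell$ inside $B(p,2^{\ell+1})$ with pairwise distances $>2^\ell$, Fact~\ref{fact:doubling} gives $|R_\ell|\le 4^D$, while $|Q^p_{\ell+1}|\le 4^D$ (Lemma~\ref{lem:coversize}) and $|R_{\ell+1}|\le|R_\ell|\le 4^D$, for a total of $\BO{16^D}$; finally, the $.\mbox{weight}$/$.\mbox{mis}$ refresh of the level-$(\ell+1)$ nodes costs $\BO{12^D\rank(M)}$, because the total number of children of these nodes is $\BO{12^D}$ (Lemma~\ref{lem:coversize}, plus the analogous doubling-dimension bound applied to the relocated nodes) and each per-node update is a maximal-independent-set computation on a union of $\le\rank(M)$-sized sets via the independence oracle. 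Summing over the levels yields $\BO{(16^D+12^D\rank(M))\log\Delta}$.

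\textbf{Main obstacle.} The delicate part is the correctness of Property \ref{CTp3} under promotions: it requires precisely the combination of the ball-confinement invariant $R_\ell\subseteq B(p,2^{\ell+1})$, the screening performed by the relocation test against $Q^p_{\ell+1}\cup R_{\ell+1}$ (and the observation that the relocated siblings of a removed $p$-node already belong to the relevant cover set, so they are caught by the previous iteration's test), and Lemma~\ref{lem:coverdist} for the far-away nodes. The root-level bookkeeping, though lengthy, is routine once these invariants are in place.
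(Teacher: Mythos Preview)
Your proof is correct and follows the same overall structure as the paper's: confine all modifications to the cover sets via Lemma~\ref{lem:coverdist}, verify Properties~\ref{CTp1}--\ref{CTp3} locally, handle the augmented fields by Fact~\ref{fact:combination}, and bound the per-level work using the doubling-dimension packing argument. The one methodological difference is in how the size bounds are obtained. The paper observes in a single line that $Q^p_\ell\cup R_\ell$ is precisely the cover set $\hat{Q}^p_\ell$ for the query point $p$ in the \emph{new} tree for $S-\{p\}$, and then invokes Lemma~\ref{lem:coversize} directly to get $|Q^p_\ell\cup R_\ell|\le 4^D$ and a total of at most $12^D$ children. You instead establish the underlying invariants explicitly---$R_\ell\subseteq B(p,2^{\ell+1})$ with pairwise separation $>2^\ell$---and then apply Fact~\ref{fact:doubling} from scratch. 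Your route is more self-contained (the paper's identification $Q^p_\ell\cup R_\ell=\hat{Q}^p_\ell$ is asserted rather than proved, and checking it essentially requires your invariants anyway) at the cost of slightly looser constants, since you bound $|Q^p_{\ell+1}|$ and $|R_{\ell+1}|$ separately rather than their union. Both yield the stated $\BO{(16^D+12^D\rank(M))\log\Delta}$ bound.
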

\begin{proof}
It is easy to see that the bottom-up phase of the deletion algorithm
enforces, for every level $\ell$,
Properties \ref{CTp1}, \ref{CTp2}, \ref{CTp3} of the definition of
cover tree, restricted to the nodes in  $Q^p_{\ell} \cup R_{\ell}$.
These properties immediately extend to
the entire level $\ell$ by virtue of Lemma~\ref{lem:coverdist}.
Finally, correctness of the update of the $.\mbox{weight}$ and
$.\mbox{mis}$ fields can be argued as for the insertion algorithm. 
For what concerns the complexity bound, first observe that for every level $\ell$,
the nodes in $Q^p_{\ell} \cup R_{\ell}$ represent  the new coversets $\hat{Q}^p_{\ell}$ associated to 
$p$ after its deletion from $T$, thus Lemma~\ref{lem:coversize} holds. As a consequence, 
the work needed to process the nodes in $R_\ell$ is  $\Theta(|R_\ell|(|Q^p_{\ell+1} \cup R_{\ell+1}|) = O(16^D)$, 
while, by Fact~\ref{fact:combination}, recreating the $.\mbox{weight}$ and $.\mbox{mis}$ fields for all nodes in $Q^p_{\ell+1} \cup R_{\ell+1}$ is
upper bounded by the number of their children multiplied by $(1+\mbox{rank}(M))$. The final bound follows 
 by observing that there are $\BO{\log \Delta}$ levels in
the explicit representation of $T$.
\end{proof}

\section{Extracting solutions from the augmented cover tree} \label{sec:approx}
We show how to employ the augmented cover tree
presented in the previous section to extract accurate solutions to
the various $k$-center related and diversity maximization problems introduced
in Section~\ref{sec:preliminaries}. For all these problems, we rely on the
extraction from the cover tree 
of a small $(\epsilon,k)$-coreset (see Definition~\ref{def:coreset}), 
for suitable values of $\epsilon$ and $k$.

Let $T$ be an augmented cover tree for a set $S$ of $n$ points
from a metric space of doubling dimension $D$. Given $\epsilon$ and $k$, 
an  $(\epsilon,k)$-coreset for $S$ can be constructed as follows. 
Let $T_{\ell(k)}$ be the level of largest index 
(in the implicit representation of $T$)
such that $|T_{\ell(k)}| \leq k$ and $|T_{\ell(k)-1}| > k$. Then, define
\begin{equation} \label{eq:ellstar}
\ell^*(\epsilon,k) =  \max\{\minlv,\ell(k)-\lceil \log_2 (8/\epsilon) \rceil\}.
\end{equation}
(For ease of notation, in what follows we shorthand
$\ell^*(\epsilon,k)$ with $\ell^*$ whenever the parameters are clear from the
context.)  We have:
\begin{lemma} \label{lem:coreset}
The set of points $\pts(T_{\ell^*})$ is an $(\epsilon,k)$-coreset for
$S$ of size at most $k(64/\epsilon)^D$ and can be constructed in time
$\BO{k((64/\epsilon)^D+\log \Delta)}$.
\end{lemma}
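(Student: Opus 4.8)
The plan is to establish the three parts of the lemma --- the coreset property, the size bound, and the running time --- essentially in that order.

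\medskip
\noindent\textbf{Coreset property.} First I would argue that $\pts(T_{\ell^*})$ is an $(\epsilon,k)$-coreset, i.e.\ that $r_{\pts(T_{\ell^*})}(S) \leq \epsilon\, r^*_k(S)$. The key is a lower bound on $r^*_k(S)$ in terms of the level $\ell(k)$. Since $|T_{\ell(k)-1}| > k$, Property~\ref{CTp3} of the cover tree says that $\pts(T_{\ell(k)-1})$ contains more than $k$ points that are pairwise at distance $> 2^{\ell(k)-1}$; any set of $k$ centers must leave one of these points at distance $> 2^{\ell(k)-1}/2 = 2^{\ell(k)-2}$ from all centers (two such points cannot share a center by the triangle inequality), so $r^*_k(S) > 2^{\ell(k)-2}$. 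On the other hand, I claim $r_{\pts(T_{\ell^*})}(S) \leq 2^{\ell^*+2}$: every point $p \in S$ appears at the bottom of the implicit tree, and following parent pointers from that leaf up to level $\ell^*$, Property~\ref{CTp2} gives a node $u \in T_{\ell^*}$ with $\dist(p,u) \leq \sum_{j \le \ell^*} 2^{j+1} = 2^{\ell^*+2}$. Combining, when $\ell^* = \ell(k) - \lceil \log_2(8/\epsilon)\rceil$ we get $r_{\pts(T_{\ell^*})}(S) \leq 2^{\ell(k)+2}/2^{\lceil\log_2(8/\epsilon)\rceil} \leq 2^{\ell(k)+2}\cdot(\epsilon/8) = (\epsilon/2)\cdot 2^{\ell(k)-1} < \epsilon \cdot 2^{\ell(k)-2}\cdot 2 \le \epsilon\, r^*_k(S)$ (adjusting the constant slack). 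The edge case is $\ell^* = \minlv$, forced when $\ell(k) - \lceil\log_2(8/\epsilon)\rceil < \minlv$; then $\pts(T_{\minlv}) = S$, so $r_{\pts(T_{\ell^*})}(S) = 0$ and the property holds trivially.

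\medskip
\noindent\textbf{Size bound.} Next I would bound $|\pts(T_{\ell^*})|$. The idea is to charge the points of $T_{\ell^*}$ to the (at most $k$) points of $T_{\ell(k)}$ by walking down the $\lceil\log_2(8/\epsilon)\rceil$ levels between them. Fix any $v \in T_{\ell(k)}$ and let $u \in T_{\ell^*}$ be a node whose leaf-to-root path passes through $v$; iterating Property~\ref{CTp2} over the levels $\ell^*+1,\dots,\ell(k)$ gives $\dist(u,v) \leq \sum_{j=\ell^*}^{\ell(k)-1} 2^{j+1} < 2^{\ell(k)+1}$, so $\pts(T_{\ell^*}) \cap$ (subtree of $v$) $\subseteq B(v, 2^{\ell(k)+1})$. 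By Property~\ref{CTp3}, points of $T_{\ell^*}$ are pairwise at distance $> 2^{\ell^*}$, and $2^{\ell(k)+1}/2^{\ell^*} = 2^{\ell(k)+1-\ell^*} = 2^{\lceil\log_2(8/\epsilon)\rceil+1} \le 16/\epsilon$, so Fact~\ref{fact:doubling} bounds the number of such points by $(4 \cdot 2^{\lceil\log_2(8/\epsilon)\rceil+1})^D \le (64/\epsilon)^D$. Multiplying by $|T_{\ell(k)}| \le k$ gives the claimed $k(64/\epsilon)^D$. (If $\ell^* = \minlv$ one instead uses that $|T_{\minlv}| = |S|$ may exceed this, but in that regime the same charging argument down from $T_{\ell(k)}$ still works since $\minlv > \ell(k) - \lceil\log_2(8/\epsilon)\rceil$, so the gap is even smaller.)

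\medskip
\noindent\textbf{Running time.} Finally, for the construction time, I would observe that the only quantities needed are $\ell(k)$ and the set $\pts(T_{\ell^*})$. The level $\ell(k)$ is found by a top-down traversal maintaining the cardinalities $|T_\ell|$ --- these are obtained for free from the explicit representation by counting nodes level by level, or can be precomputed; this costs $\BO{k \log\Delta}$ since once $|T_\ell|$ exceeds $k$ we can stop descending further into the relevant part, and there are $\BO{\log\Delta}$ levels. Then $\ell^*$ is computed by formula~\eqref{eq:ellstar} in $\BO{1}$ time, and $\pts(T_{\ell^*})$ is enumerated from the explicit representation by expanding contracted chains down to level $\ell^*$; since $|\pts(T_{\ell^*})| \le k(64/\epsilon)^D$ and every explicit node touched is an ancestor of one of these, this costs $\BO{k(64/\epsilon)^D}$. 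Adding the two contributions gives $\BO{k((64/\epsilon)^D + \log\Delta)}$.

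\medskip
\noindent\textbf{Main obstacle.} I expect the bookkeeping around the $\max$ with $\minlv$ in~\eqref{eq:ellstar} to be the fiddly part: when $\ell^* = \minlv$ the coreset is all of $S$ and the size ``bound'' $k(64/\epsilon)^D$ need not hold literally, so the statement implicitly relies on the fact that in that regime $|S| = |\pts(T_{\minlv})|$ is itself no larger (which follows because $\minlv$ being the largest level with $\pts(T_\ell)=S$ and $\minlv \ge \ell(k) - \lceil\log_2(8/\epsilon)\rceil$ forces the charging argument to still apply with an even smaller ball). Making this case fully rigorous, rather than the ``generic'' case, is where care is needed; the constants in the generic case are routine.
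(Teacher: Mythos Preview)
Your proposal is correct and follows essentially the same approach as the paper: the pigeonhole argument on $T_{\ell(k)-1}$ to lower-bound $r^*_k(S)$, the telescoping of Property~\ref{CTp2} to bound the coreset radius, the charging of $T_{\ell^*}$ to the $\leq k$ ancestors in $T_{\ell(k)}$ combined with Fact~\ref{fact:doubling} for the size, and the level-by-level traversal split at $\ell(k)$ for the running time are exactly what the paper does. One small sharpening: the paper uses $\dist(p,\pts(T_{\ell^*})) \leq 2^{\ell^*+1}$ (the geometric sum stops at $j=\ell^*-1$, not $j=\ell^*$), which makes the constant come out to exactly $\epsilon$ without the ``adjusting the slack'' handwave; with your $2^{\ell^*+2}$ you would only get $2\epsilon$. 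Your worry about the $\ell^*=\minlv$ edge case is well-placed but, as you note, the charging argument still applies there verbatim since the level gap only shrinks.
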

\begin{proof}
We first show that $\pts(T_{\ell^*})$ is an $(\epsilon,k)$-coreset for
$S$.  If $\ell^*=\minlv$, we have $\pts(T_{\ell^*})=S$, so the
property is trivially true. Suppose that $\ell^* = \ell(k)-\lceil
\log_2 (8/\epsilon)\rceil > \minlv$ and consider an arbitrary point $p \in
S$. There must exist some node $u \in T_{\minlv}$ such that $p =
u.\mbox{point}$. Let $v$ be the ancestor of $u$ in $T_{\ell^*}$.  By
the properties of the cover tree we know that $\dist(v.\mbox{point},p)
\leq 2^{\ell^*+1} \leq (\epsilon/4) 2^{\ell(k)}$.  Also, all pairwise
distances among points of $\pts(T_{\ell(k)-1})$ are greater than
$2^{\ell(k)-1}$. However, since $|T_{\ell(k)-1}| \geq k+1$, there must
be two points $q,q' \in \pts(T_{\ell(k)-1})$ which belong to the same
cluster in the optimal $k$-center clustering of $S$. Therefore,
$2^{\ell(k)-1} < \dist(q,q') \leq 2r^*_k(S)$, which implies that
$2^{\ell(k)} \leq 4r^*_k(S)$. Putting it all together, we get that for
any $p \in S$, $\dist(p,\pts(T_{\ell^*})) \leq \epsilon r^*_k(S)$.
Let us now bound the size of $T_{\ell^*}$.  By construction
$|T_{\ell(k)}| \leq k$, and we observe that $T_{\ell^*}$ can be
partitioned into $|T_{\ell(k)}|$ subsets $T_{\ell^*}^u$, for every $u
\in T_{\ell(k)}$, where $T_{\ell^*}^u$ is the set of descendants of
$u$ in $T_{\ell^*}$. The definition of cover tree implies that for
each $u \in T_{\ell(k)}$ and $v \in T_{\ell^*}^u$, $\dist(u,v) \leq
2^{\ell(k)+1}$. Moreover, since the pairwise distance between points
of $T_{\ell^*}^u$ is greater than $2^{\ell^*}$, by applying
Fact~\ref{fact:doubling} with $Y=T_{\ell^*}^u$, $R=2^{\ell(k)+1}$ and
$r=2^{\ell^*}$, we obtain that
\[
|T_{\ell^*}^u| \leq 2^{(\lceil \log_2(8/\epsilon) \rceil +2) \cdot D} 
\leq (64/\epsilon)^D,
\]
and the bound on $|T_{\ell^*}|$ follows. $T_{\ell^*}$ can 
be constructed 
on the explicit tree through a simple level-by-level visit
up to level $\ell^*$, which can be easily determined from $\ell(k)$
and the fact that $\minlv$ is the largest level $\ell$ for which all nodes in
$T_{\ell}$ only have the self-child. The construction time is linear in
\[
\sum_{\ell=\ell^*}^{\maxlv} |T_{\ell}|
= 
\sum_{\ell=\ell^*}^{\ell(k)-1} |T_{\ell}|+\sum_{\ell=\ell(k)}^{\maxlv} |T_{\ell}|
\]
The second summation is clearly upper bounded by
$k \log \Delta$, while, using again Fact~\ref{fact:doubling}
it is easy to argue that $|T_{\ell}| \leq 2^{(\ell(k)+2-\ell)\cdot D}$,
for every $\ell^* \leq \ell \leq \ell(k)-1$, whence the first sum
is $\BO{(64/\epsilon)^D}$. The lemma follows. 
\end{proof}

\noindent
{\bf Remark.} Consider an arbitrary node $u \in T_{\ell^*}$ and recall that,
in the augmented version of the cover tree, the fields
$u.\mbox{weight}$ and $u.\mbox{mis}$ contain, respectively, the size 
and a maximal independent set of $S_u$, where $S_u$ 
is the subset of points of $S$
associated with the descendants of $u$ in $T$. The proof of the above
lemma shows that for any $p \in S_u$ (thus, for any $p$ 
accounted for by $u.\mbox{weight}$ and any $p$ of the maximal independent set)
$\dist(p,u.\mbox{point}) \leq \epsilon r^*_k(S)$.

\subsection{Solving $k$-center} \label{sec:kcenter}
Suppose that
an (augmented) cover tree $T$ for $S$ is available.
We can compute an $\BO{2+\BO{\epsilon}}$-approximate solution $C$ 
to $k$-center on $S$ as follows. First, we extract the coreset $Q = \pts(T_{\ell^*})$,
where $\ell^*=\ell^*(\epsilon,k)$ is the
index defined in Equation~\ref{eq:ellstar}, and then
run  a sequential algorithm 
for $k$-center on $Q$. 
To do so, we could use Gonzalez's 2-approximation algorithm.
However, this would contribute an $\BO{k |Q|}$ term to the running
time, which, based on the size bound stated in
Lemma~\ref{lem:coreset}, would yield a quadratic dependency on $k$.
The asymptotic dependency on $k$ can be lowered by computing the
solution through an adaptation of the techniques presented in
\cite{GoranciHLSS21}, as explained below. Let us define a
generalization of the cover tree data structure, dubbed
\emph{$(\alpha,\beta)$-cover tree}, where the three properties that
each level $\ell$ must satisfy are rephrased as follows:
\begin{enumerate}
\item 
$\pts(T_{\ell}) \subseteq \pts(T_{\ell-1})$;
\item 
for each $u \in T_{\ell}$, $\dist(u, u.{\rm parent}) \leq \beta \cdot \alpha^{\ell+1}$;
\item 
for all $u, v \in T_{\ell}$, $\dist(u, v) > \beta  \cdot \alpha^{\ell}$.
\end{enumerate}
By adapting the insertion procedure of Section~\ref{sec:insertion} and
its analysis, it is easily seen that the insertion of a new point in
the data structure can be supported in $\BO{12^D \cdot \log_{\alpha}
  \Delta}$ time. For a given integer parameter $m$, we construct $m$
generalized cover trees for $Q$, namely an
$(\alpha,\alpha^{p/m})$-cover tree $T^{(p)}$ for every $1 \leq p \leq
m$. Each cover tree is constructed by inserting one point of $Q$ at a
time.  Let $\ell_p$ be the smallest index such that level
$T^{(p)}_{\ell_p}$ in $T^{(p)}$ has at most $k$ nodes. The returned
solution $C$ is the set $\pts(T^{(p)}_{\ell_p})$ such that
$T^{(p)}_{\ell_p}$ minimizes $\alpha^{\ell_p+p/m}$. By selecting
$\alpha=2/\epsilon$ and $m = \BO{\epsilon^{-1} \ln \epsilon^{-1}}$,
and by using the argument of \cite{GoranciHLSS21}, it can be shown
that $C$ is a $(2+\BO{\epsilon})$-approximation for $k$-center on $Q$.

We have:
\begin{theorem}
Given an augmented cover tree $T$ for $S$, the above procedure returns a
 $(2+\BO{\epsilon})$-approximation
$C$ to the $k$-center problem for $S$, 
and can be implemented in time $\BO{(k/\epsilon) (768/\epsilon)^D \log\Delta}$. 
\end{theorem}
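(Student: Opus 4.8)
The plan is to split the statement into a correctness part and a running-time part, both resting on Lemma~\ref{lem:coreset} and on the analysis of the generalized $(\alpha,\beta)$-cover trees sketched above; throughout I would take $0<\epsilon\le 1$, so that $\alpha=2/\epsilon\ge 2$ and $\ln(2/\epsilon)\ge\ln 2$.

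\emph{Correctness.} By Lemma~\ref{lem:coreset} the extracted set $Q=\pts(T_{\ell^*})$ is an $(\epsilon,k)$-coreset for $S$, i.e.\ $r_Q(S)\le\epsilon\,r^*_k(S)$. First I would relate the optimum on $Q$ to the optimum on $S$: taking an optimal $C^*\subseteq S$ for $S$ and, for each $c\in C^*$, a point $t_c\in Q$ with $\dist(c,t_c)\le r_Q(S)$ (which exists because $c\in S$ and $r_Q(S)=\max_{p\in S}\dist(p,Q)$), the set $\{t_c:c\in C^*\}\subseteq Q$ has radius at most $r^*_k(S)+r_Q(S)\le(1+\epsilon)\,r^*_k(S)$ with respect to $Q$, so $r^*_k(Q)\le(1+\epsilon)\,r^*_k(S)$. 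Next, as recalled above (the argument of \cite{GoranciHLSS21}: the $m$ fractional shifts $p/m$ make the geometric level grid have effective ratio $\alpha^{1/m}=1+\BO{\epsilon}$, and the geometric-sum covering factor $\alpha/(\alpha-1)$ is also $1+\BO{\epsilon}$ for $\alpha=2/\epsilon$), the returned $C$ obeys $r_C(Q)\le(2+\BO{\epsilon})\,r^*_k(Q)$. Finally I would transfer this back to $S$ by a triangle inequality: for any $p\in S$ and its nearest $q\in Q$, $\dist(p,C)\le\dist(p,q)+\dist(q,C)\le r_Q(S)+r_C(Q)$, hence
\[
r_C(S)\le r_Q(S)+r_C(Q)\le\epsilon\,r^*_k(S)+(2+\BO{\epsilon})(1+\epsilon)\,r^*_k(S)=(2+\BO{\epsilon})\,r^*_k(S).
\]

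\emph{Running time.} By Lemma~\ref{lem:coreset}, extracting $Q$ costs $\BO{k((64/\epsilon)^D+\log\Delta)}$ and $|Q|\le k(64/\epsilon)^D$. Each of the $m$ generalized cover trees is built by $|Q|$ insertions, each of cost $\BO{12^D\log_\alpha\Delta'}$ where $\Delta'\le\Delta$ is the aspect ratio of $Q$; with $\alpha=2/\epsilon$ this is $\BO{12^D\log\Delta/\ln(2/\epsilon)}$. The construction thus costs $\BO{m\,|Q|\,12^D\log\Delta/\ln(2/\epsilon)}$, and the key point is the cancellation $m\cdot\log_\alpha\Delta=\BO{\epsilon^{-1}\ln\epsilon^{-1}}\cdot\BO{\log\Delta/\ln(2/\epsilon)}=\BO{\epsilon^{-1}\log\Delta}$; combining with $|Q|\le k(64/\epsilon)^D$ and $12^D(64/\epsilon)^D=(768/\epsilon)^D$ yields $\BO{(k/\epsilon)(768/\epsilon)^D\log\Delta}$. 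Extracting the solution from each tree — scanning levels to find the smallest one with at most $k$ nodes and reading off its points — costs $\BO{|Q|}$ per tree, hence $\BO{m|Q|}$ in total, which, like the coreset-extraction cost, is dominated; the claimed bound follows.

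\emph{Main obstacle.} The crux is the $(2+\BO{\epsilon})$-approximation bound for $k$-center on $Q$ produced by the family of generalized cover trees, i.e.\ the grid-shifting argument imported from \cite{GoranciHLSS21}; the second, more bookkeeping-flavored subtlety is spotting the cancellation of the $\ln\epsilon^{-1}$ factor between $m$ and $\log_\alpha\Delta$, without which the running time would carry a spurious logarithmic factor. By contrast, the coreset-to-$S$ reduction (via representatives $t_c$) and the triangle-inequality step are routine.
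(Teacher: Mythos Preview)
Your proof is correct and follows essentially the same approach as the paper: both establish $r^*_k(Q)\le(1+\epsilon)r^*_k(S)$ via representatives of the optimal centers in the coreset, invoke the $(2+\BO{\epsilon})$-approximation on $Q$ from the generalized cover trees, and transfer back to $S$ by the triangle inequality; for the running time, both rely on the cancellation $m\cdot\log_\alpha\Delta=\BO{\epsilon^{-1}\log\Delta}$, which you make more explicit than the paper does.
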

\begin{proof}
\sloppy
By Lemma~\ref{lem:coreset}, $Q$ is an
$(\epsilon,k)$-coreset for $S$. Let $C^*=\{c_1, c_2, \ldots, c_k\}$ be
an optimal solution for $k$-center on $S$. The coreset property of $Q$
ensures that for each $c_i$ there is a point $c'_i \in Q$ such that 
$d(c_i,c'_i) \leq \epsilon r^*_k(S)$. This implies that the set
$C'=\{c'_1, c'_2, \ldots, c'_k\}$ is a solution to $k$-center on $Q$
with $r_{C'}(Q) \leq (1+\epsilon) r^*_k(S)$, hence
$r^*_k(Q) \leq (1+\epsilon) r^*_k(S)$.
Suppose that the
$(2+\BO{\epsilon})$-approximation algorithm outlined above is used in
Phase~2 to compute the solution $C$ on $Q$. Then,
$r_C(Q) \leq (2+\BO{\epsilon})r^*_k(Q) \leq 
(2+\BO{\epsilon})(1+\epsilon)r^*_k(S) = (2+\BO{\epsilon}) r^*_k(S)$.
By the coreset property and the triangle inequality, 
it follows that $r_C(S) \leq (2+\BO{\epsilon}) r^*_k$.
For what concerns the
running time, we have that the construction of the
$(\epsilon,k)$-coreset $Q$ requires $\BO{k((64/\epsilon)^D+\log
  \Delta)}$ time (see Lemma~\ref{lem:coreset}), while the running time
of Phase 2 is dominated by the construction of the
$m=\BO{\epsilon^{-1} \ln \epsilon^{-1}}$ $(\alpha,
\alpha^{p/m})$-cover trees $T^{(p)}$, for $1 \leq p \leq m$, by
successive insertions of the elements of $Q$. As observed above, an
insertion takes $\BO{12^D\log_\alpha\Delta}$ time, hence the cost for
constructing each $T^{(p)}$ is $\BO{|Q| 12^D\log_\alpha\Delta}$.
Since $\alpha=2/\epsilon$ and $|Q| \leq k(64/\epsilon)^D$, the total
cost is thus $\BO{(k/\epsilon) (768/\epsilon)^D \log\Delta}$, which
dominates over the cost of Phase 1.
\end{proof}

The constants involved in the analysis are likely to
provide very conservative upper bounds on the actual behavior of the
data structure in practical scenarios. Moreover, for moderate values of
$k$, the use of Gonzalez's algorithm in Phase 2 might prove a much
more practical choice. We wish to remark that
our algorithm improves upon the one
proposed in \cite{GoranciHLSS21} in several ways. First, the accuracy
parameter $\epsilon$ can be chosen freely at query time, and does not
influence the construction of the data structure. Second, our data
structure requires linear space and can handle insertions and
deletions at a lower asymptotic cost.

\subsection{Solving $k$-center with $z$ outliers} \label{sec:outliers}
For the $k$-center problem with $z$ outliers, an approach similar to
the one adopted for $k$-center can be employed.  Let $T$ be an
augmented cover tree for $S$.  We can compute a
$(3+\BO{\epsilon})$-approximation to $k$-center with $z$ outliers on
$S$, by proceeding as follows. First, we extract the coreset $Q =
\pts(T_{\ell^*})$, where $\ell^* = \ell^*(k+z)$ is the index defined
in Equation~\ref{eq:ellstar}. Each point $q \in Q$ is associated to
the weight $w_q = u.\mbox{weight}$, where $u \in T_{\ell^*}$ is such
that $u.\mbox{point}=q$. Then, we extract the solution $C$ from this
weighted coreset $Q$ using the techniques from \cite{CeccarelloPP19},
which are reviewed below.

By Lemma~\ref{lem:coreset}, $Q$ is an $(\epsilon,k+z)$-coreset for $S$
and, based on the remark made after Lemma~\ref{lem:coreset}, all
points of $S$ can be associated to the points of $Q$, such that, for
every $q \in Q$, $w_q$ points of $S$ are associated to $q$ ($q$ is
referred to as the \emph{proxy} for these points) and they are all at
distance at most $\epsilon r^*_{k+z}(S)$ from $q$.  Also recall, from
Equation~\ref{eq:radius-relation}, that $r^*_{k+z}(S) \leq
r^*_{k,z}(S)$. Suppose that algorithm {\sc OutliersCluster} described
in \cite{CeccarelloPP19} is run on the weighted coreset $Q$ with
parameters $k$, $r$, and $\epsilon$, where $r$ is a guess of the
optimal radius. The analysis in \cite{CeccarelloPP19} shows that the
algorithm returns two subsets $X, Q' \subseteq Q$ such that
\begin{itemize}
\item
$|X| \leq k$
\item
For every $p \in S$ whose proxy is in $Q-Q'$,
$\dist(p,X) \leq \epsilon r^*_{k,z}(S)+(3+4\epsilon)r$;
\item
if $r \geq r^*_{k,z}(S)$, then $\sum_{q \in Q'} w_q \leq z$.
\end{itemize}
Then, we can repeatedly run  {\sc OutliersCluster}
for $r = 2^{\maxlv}/(1+\epsilon)^i$, for $i=0,1, \ldots$, stopping at the
smallest guess $r$ which returns a pair $(X,Q')$ where
$Q'$ has aggregate weight at most $z$ and returning $C=X$ as the final solution.We have:
\begin{theorem}
Given an augmented cover tree $T$ for $S$, the above procedure returns a
 $(3+\BO{\epsilon})$-approximation
$C$ to the $k$-center problem with $z$ outliers for $S$, 
and can be implemented in time 
$\BO{(k+z)^2(64/\epsilon)^{2D}(1/\epsilon)\log \Delta}$.
\end{theorem}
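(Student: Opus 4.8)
The plan is to feed the weighted coreset produced by Lemma~\ref{lem:coreset} (with parameters $\epsilon$ and $k+z$) to algorithm {\sc OutliersCluster}, and then to use the optimality of $r_{k,z}^*(S)$ to pin down the radius guess $\hat r$ at which the geometric search halts. First I would record the coreset facts: by Lemma~\ref{lem:coreset}, $Q=\pts(T_{\ell^*})$ with $\ell^*=\ell^*(\epsilon,k+z)$ is an $(\epsilon,k+z)$-coreset of size at most $(k+z)(64/\epsilon)^D$, and by the Remark after it each $q\in Q$ is the proxy of exactly $w_q$ points of $S$, all within distance $\epsilon r_{k+z}^*(S)\le\epsilon r_{k,z}^*(S)$ of $q$, the inequality being~(\ref{eq:radius-relation}).

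Next I would argue correctness. Running the search from a starting guess $r_0\ge\maxdist\ge r_{k,z}^*(S)$ (e.g.\ $r_0=2^{\maxlv+2}$) and shrinking $r$ by a factor $1+\epsilon$ per step, there is an index in the sequence with $r_{k,z}^*(S)\le r'<(1+\epsilon)r_{k,z}^*(S)$; by the third property of {\sc OutliersCluster} this $r'$ returns $\sum_{q\in Q'}w_q\le z$, hence it is a ``successful'' guess, so the smallest successful guess satisfies $\hat r\le r'<(1+\epsilon)r_{k,z}^*(S)$. At $r=\hat r$, {\sc OutliersCluster} returns $X,Q'\subseteq Q$ with $|X|\le k$, with $\sum_{q\in Q'}w_q\le z$ (the halting test), and with $\dist(p,X)\le\rho$ for every $p\in S$ whose proxy lies in $Q-Q'$, where $\rho:=\epsilon r_{k,z}^*(S)+(3+4\epsilon)\hat r$. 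The points of $S$ whose proxy lies in $Q'$ number $\sum_{q\in Q'}w_q\le z$, so they are the only points of $S$ that can lie farther than $\rho$ from $X$; hence discarding the $z$ points of $S$ farthest from $X$ leaves every surviving point within $\rho$, giving $r_X(S-Z_X)\le\rho<\epsilon r_{k,z}^*(S)+(3+4\epsilon)(1+\epsilon)r_{k,z}^*(S)=(3+\BO{\epsilon})r_{k,z}^*(S)$. Together with $|X|\le k$ this shows $C=X$ is a $(3+\BO{\epsilon})$-approximation.

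For the running time I would proceed as follows. Extracting the weighted coreset costs $\BO{(k+z)((64/\epsilon)^D+\log\Delta)}$ by Lemma~\ref{lem:coreset}, the weights being read off the augmented nodes at no extra cost, and $|Q|\le(k+z)(64/\epsilon)^D$. One call of {\sc OutliersCluster} on the $|Q|$ weighted points costs $\BO{|Q|^2}$ time~\cite{CeccarelloPP19}. To bound the number of guesses, I would restrict to the non-trivial case $n>k+z$ (otherwise a radius-$0$ solution is immediate): any feasible $(k,z)$-solution covers more than $k$ points with $k$ centers, so two of them share a center and $r_{k,z}^*(S)>\mindist/2>2^{\minlv-1}$; on the other hand, the correctness argument shows every successful guess $r$ satisfies $r\ge\frac{1-\epsilon}{3+4\epsilon}\,r_{k,z}^*(S)$ (a smaller successful $r$ would make $X$ beat the optimum), so the first failing guess, and with it the halt, is reached once $r$ has shrunk from $\BO{2^{\maxlv}}$ to $\BOM{2^{\minlv}}$, i.e.\ within $\BO{\log_{1+\epsilon}2^{\maxlv-\minlv}}=\BO{\epsilon^{-1}\log\Delta}$ iterations. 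Multiplying the per-guess cost by the number of guesses gives $\BO{(k+z)^2(64/\epsilon)^{2D}\epsilon^{-1}\log\Delta}$, which dominates the coreset extraction.

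The step I expect to be the main obstacle is certifying $\hat r=\Theta(r_{k,z}^*(S))$. The upper bound $\hat r<(1+\epsilon)r_{k,z}^*(S)$ is the delicate direction: it cannot be obtained from monotonicity of the predicate ``$\sum_{q\in Q'}w_q\le z$'' in $r$ (which need not be monotone), but only from the observation that the particular guess $r'$ sitting just above $r_{k,z}^*(S)$ is \emph{provably} successful via {\sc OutliersCluster}'s third property, so the smallest successful guess cannot exceed $r'$; the matching lower bound, which both prevents a spuriously small approximation radius and caps the iteration count, follows from optimality of $r_{k,z}^*(S)$ combined with the pigeonhole estimate $r_{k,z}^*(S)=\BOM{2^{\minlv}}$.
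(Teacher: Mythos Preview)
Your proof is correct and follows the same approach as the paper's (which is much terser): invoke the three properties of {\sc OutliersCluster}, bound $\hat r<(1+\epsilon)r^*_{k,z}(S)$ via the guess $r'$ lying just above the optimum, and bound the iteration count by $\BO{\log_{1+\epsilon}\Delta}$ through $2^{\maxlv}/r^*_{k,z}(S)=\BO{\Delta}$. Your additional care---the pigeonhole estimate $r^*_{k,z}(S)\ge\mindist/2$ in the non-trivial case and the optimality-based lower bound on any successful $r$---supplies details that the paper leaves implicit.
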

\begin{proof}
The bound on the approximation ratio immediately follows from the 
properties of {\sc OutliersCluster} reviewed above. 
The running time is
dominated by the repeated executions of {\sc OutliersCluster}.
Each execution of {\sc OutliersCluster} can be performed in 
$\BO{|Q|^2+k|Q|}=\BO{(k+z)^2(64/\epsilon)^{2D}}$ time. 
The bound on the running time follows by 
observing that $2^{\maxlv}/r^*_{k,z}(S) = \BO{\Delta}$, 
whence $\BO{\log_{1+\epsilon}\Delta}=\BO{(1/\epsilon)\log \Delta}$
executions suffice.
\end{proof}

\subsection{Solving  matroid center}
Consider a matroid $M=(S,I)$ defined on a set $S$, and suppose that
an augmented cover tree $T$ for $S$ w.r.t. $M$ is available.
We can compute an $\BO{3+\BO{\epsilon}}$-approximate solution $C$ to the matroid
center problem on $M$ as follows. First we determine a coreset $Q$
as the union of the independent sets associated with the nodes
of level $T_{\ell^*}$ where $\ell^*=\ell^*(\epsilon,\rank(M))$ is the index defined
in Equation~\ref{eq:ellstar}. Namely,
\[
Q = \bigcup_{u \in T_{\ell^*}} u.\mbox{mis}.
\]
Note that $\rank(M)$ is easily obtained as the size of $r.\mbox{mis}$, where
$r$ is the root of $T$.
Then, solution $C$ is computed by running the 3-approximation algorithm by \cite{ChenLLW16} on $Q$. We have:
\begin{theorem}
\sloppy
Given an augmented cover tree $T$ for $S$ w.r.t. $M=(S,I)$, the above 
procedure returns a
 $(3+\BO{\epsilon})$-approximation
$C$ to the matroid center problem on $M$, 
and can be implemented in time 
$\BO{\mbox{\rm poly}(\rank(M),(64/\epsilon)^{D}) + \rank(M) \log \Delta}$.
\end{theorem}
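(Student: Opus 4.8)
The plan is to show that $Q=\bigcup_{u\in T_{\ell^*}}u.\mbox{mis}$, with $\ell^*=\ell^*(\epsilon,\rank(M))$, is a coreset that simultaneously \emph{covers} $S$ and \emph{contains a near-optimal independent solution}, and then to run the sequential $3$-approximation of \cite{ChenLLW16} on the submatroid $M_Q=(Q,I_Q)$, where $I_Q$ is the restriction of $I$ to subsets of $Q$. Write $\rho=\rank(M)$ and recall from Section~\ref{sec:preliminaries} that $r^*_\rho(S)\le r^*(M)$. First I would dispose of the degenerate case $\ell^*=\minlv$: then every node of $T_{\ell^*}$ is a leaf, so (assuming w.l.o.g.\ that $M$ has no loops) $u.\mbox{mis}=\{u.\mbox{point}\}$ for each such $u$, $Q=\pts(T_{\ell^*})=S$, and the claim follows directly from \cite{ChenLLW16}. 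So assume $\ell^*>\minlv$. I would then establish two properties of $Q$: \textbf{(P1)} every $p\in S$ has a $q\in Q$ with $\dist(p,q)\le 2\epsilon r^*(M)$; and \textbf{(P2)} there is $C'\in I$ with $C'\subseteq Q$ and $r_{C'}(S)\le (1+2\epsilon)r^*(M)$. Property (P1) is immediate from the remark following Lemma~\ref{lem:coreset} (applied with $k=\rho$): each $p$ lies in $S_u$ for some $u\in T_{\ell^*}$ with $\dist(p,u.\mbox{point})\le \epsilon r^*_\rho(S)\le \epsilon r^*(M)$, so any $q\in u.\mbox{mis}\subseteq S_u$ (non-empty, since $M$ has no loops) works; the same estimate shows that any two points of a common $S_u$ lie within $2\epsilon r^*(M)$, a fact I reuse below.

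The heart of the proof is (P2). Let $C^*$ be an optimal matroid-center solution, $r_{C^*}(S)=r^*(M)$, and partition it as $C^*=\bigsqcup_{u\in T_{\ell^*}}A_u$, $A_u=C^*\cap S_u$. Ordering the nodes of $T_{\ell^*}$ as $u_1,\dots,u_m$, I would build $C'$ incrementally, keeping after step $j$ an independent set $C'_j\subseteq\bigcup_{i\le j}u_i.\mbox{mis}$ with $|C'_j\cap u_i.\mbox{mis}|=|A_{u_i}|$ for all $i\le j$ and --- crucially --- with $C'_j\cup\bigcup_{i>j}A_{u_i}\in I$. At step $j+1$, hereditariness gives $C'_j\cup A_{u_{j+1}}\cup R\in I$ for $R=\bigcup_{i>j+1}A_{u_i}$; I then replace the block $A_{u_{j+1}}$ by an equal-size subset $X\subseteq u_{j+1}.\mbox{mis}$, one element at a time, keeping $R$ attached: while $|X|<|A_{u_{j+1}}|$, the set $C'_j\cup X\cup R$ is strictly smaller than the independent set $C'_j\cup A_{u_{j+1}}\cup R$, so matroid augmentation yields an augmenting element which, by disjointness of the $S_{u_i}$, must lie in $S_{u_{j+1}}$; then Fact~\ref{fact:extAug} (with $S'=S_{u_{j+1}}$ and $B=u_{j+1}.\mbox{mis}$) trades it for an augmenting element of $u_{j+1}.\mbox{mis}$. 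After $m$ steps $C':=C'_m\subseteq Q$ is independent and meets $u.\mbox{mis}$ whenever $A_u\ne\emptyset$. The radius bound follows: for $p\in S$ with nearest optimal center $c\in A_u$, any $c'\in C'\cap u.\mbox{mis}$ lies in $S_u$ together with $c$, so $\dist(p,C')\le\dist(p,c)+\dist(c,c')\le (1+2\epsilon)r^*(M)$.

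Given (P1) and (P2) the rest is routine. Since $C'\in I_Q$ and $r_{C'}(Q)\le r_{C'}(S)\le (1+2\epsilon)r^*(M)$, the optimum of matroid center on $M_Q$ is at most $(1+2\epsilon)r^*(M)$, so the algorithm of \cite{ChenLLW16} returns $C\in I_Q\subseteq I$ with $r_C(Q)\le 3(1+2\epsilon)r^*(M)$; lifting through (P1) gives $r_C(S)\le 2\epsilon r^*(M)+r_C(Q)=(3+\BO{\epsilon})r^*(M)$, with $C$ independent, as required. For the running time, the proof of Lemma~\ref{lem:coreset} with $k=\rho$ bounds $|T_{\ell^*}|$ by $\rho(64/\epsilon)^D$ and the level-by-level descent to level $\ell^*$ by $\BO{\rho((64/\epsilon)^D+\log\Delta)}$ time; reading the field $u.\mbox{mis}$, of size at most $\rho$, at each node of $T_{\ell^*}$ then assembles $Q$, whose size is $\BO{\rho^2(64/\epsilon)^D}$, in $\BO{\rho^2(64/\epsilon)^D+\rho\log\Delta}$ time; finally the call to \cite{ChenLLW16} on $M_Q$ costs $\mathrm{poly}(|Q|,\rank(M_Q))=\mathrm{poly}(\rho,(64/\epsilon)^D)$. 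Summing yields $\BO{\mathrm{poly}(\rank(M),(64/\epsilon)^D)+\rank(M)\log\Delta}$.

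The step I expect to be the main obstacle is (P2): since the coreset keeps from each subtree only a \emph{local} maximal independent set rather than all of $S_u$, one must stitch these local sets into a single independent set of $M$ that shadows the optimal solution center-for-center. This is precisely where the extended augmentation property (Fact~\ref{fact:extAug}) is indispensable, and the delicate point is to carry, throughout the incremental construction, the invariant that the partial solution together with the still-unprocessed optimal blocks remains independent --- without it the per-step augmentation argument does not go through.
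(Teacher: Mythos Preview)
Your proposal is correct and follows essentially the same route as the paper: build a hybrid independent set that mixes already-substituted centers from the local maximal independent sets with the still-unprocessed optimal centers, and use Fact~\ref{fact:extAug} to perform each swap while preserving independence; then lift the $3$-approximation on $Q$ back to $S$ via the covering bound. The only (cosmetic) difference is that the paper processes the optimal centers $c_1,\dots,c_{\rho}$ one at a time and applies Fact~\ref{fact:extAug} directly with $A=C'(i-1)\setminus\{c_i\}$ and $y=c_i$, whereas you first group them by block $A_u$ and, inside each block, precede the appeal to Fact~\ref{fact:extAug} with a standard augmentation step to locate a witness in $S_{u_{j+1}}$; that detour is unnecessary, since $c_i\in S_{u_i}$ already serves as the witness $y$ in Fact~\ref{fact:extAug}.
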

\begin{proof}
As remarked before, the nodes of $T_{\ell^*}$ induce a partition of
$S$ into subsets $\{S_u : u \in T_{\ell^*}\}$, where $S_u$ is the subset
of points of $S$ associated with the descendants of $u$ in $T$, 
and for each $q \in S_u$ we have $\dist(p,u.\mbox{point}) \leq
\epsilon r^{*}_{{\rm rank}(M)}(S) \leq r^*(M)$ . Consider an optimal solution
$C^*=\{c_1, c_2, \ldots, c_{{\rm rank}(M)}\}$ to the matroid center
problem on $M$, and let $c_i \in S_{u_i}$, for some $u_i \in
T_{\ell^*}$.  We now show that we can substitute each $c_i$ with a
distinct element of $u_i.\mbox{mis} \subseteq Q$, so that the
resulting set of substitutes is also a maximal independent.
Inductively, suppose that we have substituted $c_j$ with a point $c'_j
\in u_j.\mbox{mis}$, for every $1 \leq j < i-1$, and that the set
$C'(i-1)=\{c'_1, \ldots, c'_{i-1}, c_i, \ldots, c_{{\rm rank}(M)}\}$ is
an independent set. By applying Fact~\ref{fact:extAug} with $A=C'(i-1)-\{c_i\}$,
$y=c_i$, $S'=S_{u_i}$, and $B=u_i.\mbox{mis}$, we have that there
exists a point $c'_i \in u_i.\mbox{mis} \backslash C'(i-1)$ such that
$C'(i)=(C'(i-1) \backslash \{c_i\})\cup \{c'_i\}$ is an independent set. 
Let $C'=C'(\rank(M))$.
Since $c_i$ and $c'_i$ belong to the same subset
$S_{u_i}$, we have $d(c_i,c'_i) \leq 2\epsilon r^*(M)$, which immediately
implies that $r_{C'}(Q) \leq (1+2\epsilon)r^*(M)$.
Therefore, the solution $C$ computed using 
the $3$-approximation algorithm by \cite{ChenLLW16}
is such that  $r_{C}(Q) \leq (3+\BO{\epsilon})r^*(M)$.
By the coreset property and the triangle inequality, 
it follows that $r_C(S) \leq (3+\BO{\epsilon}) r^*(M)$.

For what concerns the
running time, we have that the determination of the
level $T_{\ell^*}$ requires 
$\BO{\rank(M)((64/\epsilon)^D+\log \Delta)}$ time (see Lemma~\ref{lem:coreset}),
and the size of the coreset $Q$ is $\BO{(\rank(M))^2(64/\epsilon)^D}$.
The claimed bound follows since the algorithm by \cite{ChenLLW16}
runs in time polynomial in the input size.
\end{proof}

\subsection{Solving diversity maximization}
In \cite{CeccarelloPPU17}, the authors present a coreset-based
approach to approximating the optimal solution of the diversity
maximization problem on a pointset $S$, under all the diversity
measures $\divs(\cdot)$ listed in
Table~\ref{tab:diversity-notions}. Specifically, starting from an
$(\epsilon,k)$-coreset $Q$ for $k$-center on $S$, the authors obtain
the coreset $Q'=Q$ for the remote edge and the remote cycle
variants of diversity maximization, while, for all the other
variants, the coreset $Q'$ is constructed by selecting, for each
$p\in Q$, $\min\{k, |S_p|\}$ points from the subset $S_p$ of a
partition $\{S_p : p\in Q\}$ of $S$ into disjoint subsets, where each
$S_p$ contains points $q\in S$ with $\dist(p,q)\leq \epsilon
r^*_k$. It is shown in \cite{CeccarelloPPU17} that running an $\alpha$
approximation algorithm on $Q'$ yields an $(\alpha
+\BO{\epsilon})$-approximate solution for $S$.  Observe that in all
cases the coreset $Q'$ can be easily constructed from an (augmented)
cover tree $T$ for $S$.  Indeed, in the former, simple case, $Q'$ is
obtained as the set of points associated with the nodes of level
$T_{\ell^*}$, where $\ell^* = \ell^*(\epsilon, k)$ is the index
defined in Equation~\ref{eq:ellstar}. In the latter case, $Q'$ can be
obtained as follows. We need $T$ to be an augmented
cover tree w.r.t. to \emph{k-bounded cardinality matroid} for $S$,
denoted as
$M_{k,S}$, whose independent sets are all subsets of $S$ of at most $k$
points. Then, we simply set $Q' = \cup_{u\in
  T_{\ell^*}}u.\mbox{mis}$. 

For each diversity variant in  Table~\ref{tab:diversity-notions}, let $A_{\rm div}$ be the polynomial-time approximation algorithm yielding the $\alpha_{\rm div}$ approximation mentioned in the table, and let $t_{A_{\rm div}}(\cdot)$
denote its running time. We have:
\begin{theorem}
\sloppy
Consider an cover tree $T$ for $S$ $($augmented w.r.t. the $k$-bounded cardinality matroid $M_{k,S}$, when necessary$)$.  For each diversity variant in  Table~\ref{tab:diversity-notions}, running  $A_{\rm div}$ on the coreset $Q'$ extracted from $T$ returns an $(\alpha_{\rm div}+\BO{\epsilon})$-approximate solution to the diversity maximization problem in time 
$\BO{t_{A_{\rm div}}(k(64/\epsilon)^D)+k\log \Delta}$ for the remote edge and cycle variants, and time $\BO{t_{A_{\rm div}}(k^2(64/\epsilon)^D)+k\log \Delta}$ for the other variants. 
\end{theorem}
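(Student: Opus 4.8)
The plan is to derive the claim from the coreset-based analysis of \cite{CeccarelloPPU17}, used essentially as a black box, so that the only work specific to our setting is showing that the set $Q'$ produced from the (augmented) cover tree is precisely the kind of coreset that their analysis requires, and then bounding the running time of its construction. Recall that \cite{CeccarelloPPU17} establishes the following: for the remote-edge and remote-cycle variants, running an $\alpha_{\rm div}$-approximation algorithm on any $(\epsilon,k)$-coreset $Q'$ for $S$ (in the sense of Definition~\ref{def:coreset}) returns an $(\alpha_{\rm div}+\BO{\epsilon})$-approximate solution for $S$; and for all the other variants, the same conclusion holds whenever $Q'$ is built from a partition $\{S_p : p \in Q\}$ of $S$, where $Q$ is an $(\epsilon,k)$-coreset and each block $S_p$ consists of points within distance $\epsilon r^*_k(S)$ of $p$, by selecting $\min\{k,|S_p|\}$ points out of each $S_p$.

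For the remote-edge and remote-cycle variants I would simply take $Q' = \pts(T_{\ell^*})$, with $\ell^* = \ell^*(\epsilon,k)$ as in Equation~\ref{eq:ellstar}: Lemma~\ref{lem:coreset} asserts that this is an $(\epsilon,k)$-coreset for $S$, so the cited result applies verbatim. For the remaining variants, with $T$ augmented with respect to the $k$-bounded cardinality matroid $M_{k,S}$, I would take $Q' = \bigcup_{u\in T_{\ell^*}} u.\mbox{mis}$. To match the construction of \cite{CeccarelloPPU17}, I would argue that the nodes of $T_{\ell^*}$ induce a partition $\{S_u : u \in T_{\ell^*}\}$ of $S$ (where $S_u$ is the set of points associated with the descendants of $u$ in $T$), that $\pts(T_{\ell^*})$ is an $(\epsilon,k)$-coreset, and that, by the remark following Lemma~\ref{lem:coreset}, every $p \in S_u$ satisfies $\dist(p,u.\mbox{point}) \le \epsilon r^*_k(S)$ (so any two points of the same block are mutually within $2\epsilon r^*_k(S)$, hence interchangeable as block representatives). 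Moreover, for $M_{k,S}$ the restriction to $S_u$ has as its maximal independent sets exactly the subsets of $S_u$ of size $\min\{k,|S_u|\}$, so $u.\mbox{mis}$ is precisely a selection of $\min\{k,|S_u|\}$ points from $S_u$. Hence $\{S_u\}$ together with the chosen representatives $u.\mbox{mis}$ is exactly a coreset of the required form, and the $(\alpha_{\rm div}+\BO{\epsilon})$ bound follows.

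For the running time, Lemma~\ref{lem:coreset} gives $|T_{\ell^*}| \le k(64/\epsilon)^D$ and shows that $T_{\ell^*}$, together with $\pts(T_{\ell^*})$, can be produced by a level-by-level visit of the explicit tree in $\BO{k((64/\epsilon)^D+\log\Delta)}$ time. This already handles the remote-edge and remote-cycle cases: there $|Q'| \le k(64/\epsilon)^D$ and running $A_{\rm div}$ costs $t_{A_{\rm div}}(|Q'|)$, so, using that $t_{A_{\rm div}}$ is at least linear in its input size (which absorbs the $\BO{k(64/\epsilon)^D}$ construction term), the total is $\BO{t_{A_{\rm div}}(k(64/\epsilon)^D)+k\log\Delta}$. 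For the other variants, each $u.\mbox{mis}$ holds at most $k$ points, so gathering all of them over the at most $k(64/\epsilon)^D$ nodes of $T_{\ell^*}$ costs an extra $\BO{k^2(64/\epsilon)^D}$ and yields $|Q'| \le k^2(64/\epsilon)^D$; running $A_{\rm div}$ then costs $t_{A_{\rm div}}(|Q'|)$, and summing (again absorbing the construction term into the first, superlinear term) gives $\BO{t_{A_{\rm div}}(k^2(64/\epsilon)^D)+k\log\Delta}$.

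The step I expect to be the main obstacle — indeed the only nonroutine step — is the second paragraph: verifying that the cover-tree-extracted $Q'$ conforms precisely to the coreset structure assumed in \cite{CeccarelloPPU17}, in particular that using the arbitrarily chosen maximal independent sets $u.\mbox{mis}$ of $M_{k,S}$ as block representatives (rather than, say, the $u.\mbox{point}$'s themselves) does not weaken the guarantee. Once that correspondence is established, the rest is the bookkeeping of sizes and running times sketched above.
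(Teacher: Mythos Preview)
Your proposal is correct and follows essentially the same approach as the paper: the paper's own proof is a two-line appeal to the preceding discussion (which is exactly the \cite{CeccarelloPPU17} black-box argument you spell out) together with the coreset-construction time bounds $\BO{k((64/\epsilon)^D+\log \Delta)}$ and $\BO{k^2(64/\epsilon)^D+k\log \Delta}$ that you also derive. Your write-up simply makes explicit the correspondence between $u.\mbox{mis}$ for $M_{k,S}$ and the $\min\{k,|S_u|\}$-point selection required by \cite{CeccarelloPPU17}, which the paper leaves implicit.
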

\begin{proof}
The stated bounds are an immediate consequence of the above discussion
and the observation that the construction of coreset $Q'$ can be accomplished
in $\BO{k((64/\epsilon)^D+\log \Delta)}$ time, 
for the remote edge and cycle variants, and in
$\BO{k^2(64/\epsilon)^D+k\log \Delta)}$ time, for the other variants. 
\end{proof}

\section{Conclusions} \label{sec:conclusion}
It is important to remark that for all problems treated in this
paper, when $S$ is large and both the spread $\Delta$
and the doubling dimension $D$  of the metric are small, the dynamic maintenance 
of the augmented cover tree data structure  
and the extraction of solutions can be accomplished in
time dramatically smaller than the time that would be required to
compute solutions on the entire pointset from scratch. 

Finally, the coreset-based approaches developed in \cite{PietracaprinaPS20}
for robust matroid center, and in \cite{CeccarelloPP20} for diversity maximization under matroid constraints, can be integrated with the approach presented
in this paper, to yield dynamic maintenance for these more general versions of the problems, with similar accuracy-performance tradeoffs. 

%
%
\bibliographystyle{splncs04}
\bibliography{references}

\begin{thebibliography}{10}
\providecommand{\url}[1]{\texttt{#1}}
\providecommand{\urlprefix}{URL }
\providecommand{\doi}[1]{https://doi.org/#1}

\bibitem{AbbassiMT13}
Abbassi, Z., Mirrokni, V.S., Thakur, M.: Diversity maximization under matroid
  constraints. In: Proc. {ACM KDD}. pp. 32--40 (2013)

\bibitem{Bateni2023Optimal-Fully-D}
Bateni, M., Esfandiari, H., Fichtenberger, H., Henzinger, M., Jayaram, R.,
  Mirrokni, V., Wiese, A.: Optimal fully dynamic k-center clustering for
  adaptive and oblivious adversaries. In: Proc. {ACM-SIAM SODA}. pp. 2677--2727
  (2023)

\bibitem{BeygelzimerKL06}
Beygelzimer, A., Kakade, S., Langford, J.: Cover trees for nearest neighbor.
  In: Proc. {ICML}. pp. 97--104 (2006)

\bibitem{Borassi2019Better-sliding-}
Borassi, M., Epasto, A., Lattanzi, A., Vassilvitskii, S., Zadimoghaddam, M.:
  Better sliding window algorithms to maximize subadditive and diversity
  objectives. In: Proc. {ACM PODS}. pp. 254--268 (2019)

\bibitem{CeccarelloPP19}
Ceccarello, M., Pietracaprina, A., Pucci, G.: Solving k-center clustering (with
  outliers) in mapreduce and stre aming, almost as accurately as sequentially.
  {PVLDB}  \textbf{12}(7),  766--778 (2019)

\bibitem{CeccarelloPP20}
Ceccarello, M., Pietracaprina, A., Pucci, G.: A general coreset-based approach
  to diversity maximization under matroid constraints. {ACM} Trans. Knowl.
  Discov. Data  \textbf{14}(5),  60:1--60:27 (2020)

\bibitem{CeccarelloPPU16}
Ceccarello, M., Pietracaprina, A., Pucci, G., Upfal, E.: A practical parallel
  algorithm for diameter approximation of massive weighted graphs. In: Proc.
  {IEEE} IPDPS. pp. 12--21 (2016)

\bibitem{CeccarelloPPU17}
Ceccarello, M., Pietracaprina, A., Pucci, G., Upfal, E.: Mapreduce and
  streaming algorithms for diversity maximization in metric spaces of bounded
  doubling dimension. Proc. {VLDB} Endow.  \textbf{10}(5),  469--480 (2017)

\bibitem{Chakrabarty2020The-Non-Uniform}
Chakrabarty, D., Goyal, P., Krishnaswamy, R.: The non-uniform {$k$}-center
  problem. {ACM Trans. on Algorithms}  \textbf{16}(4),  46:1--46:19 (2020)

\bibitem{Chan2018Fully-dynamic-k}
Chan, T.H., Guerqin, A., Sozio, M.: Fully dynamic k-center clustering. In:
  Proc. {WWW}. pp. 579--587 (2018)

\bibitem{Chan2023Fully-Dynamic-k}
Chan, T.H., Lattanzi, S., Sozio, M., Wang, B.: Fully dynamic k-center
  clustering with outliers. In: Proc. {COCOON}. pp. 150--161 (2023)

\bibitem{Charikar1997Incremental-clu}
Charikar, M., Chekuri, C., Feder, T., Motwani, R.: Incremental clustering and
  dynamic information retrieval. In: Proc. {ACM STOC}. pp. 626--635 (1997)

\bibitem{Charikar2001}
Charikar, M., Khuller, S., Mount, D., Narasimhan, G.: {Algorithms for Facility
  Location Problems with Outliers}. In: Proc. {ACM-SIAM SODA}. pp. 642--651
  (2001)

\bibitem{ChenLLW16}
Chen, D., J.Li, H.L., Wang, H.: Matroid and knapsack center problems.
  Algorithmica  \textbf{75}(1),  27--52 (2016)

\bibitem{ChiplunkarKR20}
Chiplunkar, A., Kale, S., Ramamoorthy, S.: How to solve fair k-center in
  massive data models. In: Proc. {ICML}. pp. 1877--1886 (2020)

\bibitem{Cohen-Addad2016Diameter-and-k-}
Cohen-Addad, V., Schwiegelshohn, C., Sohler, C.: Diameter and k-center in
  sliding windows. In: Proc. {ICALP} (2016)

\bibitem{Elkin2022Counterexamples}
Elkin, Y., Kurlin, V.: Counterexamples expose gaps in the proof of time
  complexity for cover trees introduced in 2006. In: Proc. TopoInVis. pp.
  9--17. IEEE (2022)

\bibitem{Gonzalez85}
Gonzalez, T.: {Clustering to Minimize the Maximum Intercluster Distance}.
  {Theoretical Computer Science}  \textbf{38},  293--306 (1985)

\bibitem{GoranciHLSS21}
Goranci, G., Henzinger, M., Leniowski, D., Schulz, C., Svozil, A.: Fully
  dynamic \emph{k}-center clustering in low dimensional metrics. In: Proc.
  {ALENEX} 2021. pp. 143--153 (2021)

\bibitem{GottliebKK14}
Gottlieb, L.A., Kontorovich, A., Krauthgamer, R.: Efficient classification for
  metric data. {IEEE} Trans. Information Theory  \textbf{60}(9),  5750--5759
  (2014)

\bibitem{Gupta2003Bounded-geometr}
{Gupta}, A., {Krauthgamer}, R., {Lee}, J.R.: Bounded geometries, fractals, and
  low-distortion embeddings. In: Proc. {IEEE FOCS}. pp. 534--543 (2003)

\bibitem{Harris2019A-Lottery-Model}
Harris, D., Pensyl, T., Srinivasan, A., Trinh, K.: A lottery model for
  center-type problems with outliers. {ACM} Trans. on Algorithms
  \textbf{15}(3),  36:1--36:25 (2019)

\bibitem{Kale19}
Kale, S.: Small space stream summary for matroid center. In: Proc.
  {APPROX/RANDOM}. pp. 20:1--20:22 (2019)

\bibitem{KleindessnerAM19}
Kleindessner, M., Awasthi, P., Morgenstern, J.: Fair k-center clustering for
  data summarization. In: Proc. {ICML}. pp. 3448--3457 (2019)

\bibitem{KrauthgamerL04}
Krauthgamer, R., Lee, J.: Navigating nets: simple algorithms for proximity
  search. In: Proc. {ACM-SIAM SODA}. pp. 798--807 (2004)

\bibitem{Leskovec2020Mining-of-massi}
Leskovec, J., Rajaraman, A., Ullman, J.: Mining of Sassive Data Sets. Cambridge
  University Press (2014)

\bibitem{Oxley06}
Oxley, J.: {Matroid Theory}. Oxford University Press (2006)

\bibitem{PellizzoniPP20}
Pellizzoni, P., Pietracaprina, A., Pucci, G.: Dimensionality-adaptive k-center
  in sliding windows. In: Proc. {DSAA}. pp. 197--206 (2020)

\bibitem{Pellizzoni2022k-Center-Cluste}
Pellizzoni, P., Pietracaprina, A., Pucci, G.: k-center clustering with outliers
  in sliding windows. Algorithms  \textbf{15}(2), ~52 (2022)

\bibitem{PietracaprinaPS20}
Pietracaprina, A., Pucci, G., Sold{\`{a}}, F.: Coreset-based strategies for
  robust center-type problems. arXiv  \textbf{2002.07463} (2020)

\end{thebibliography}

\appendix
\section*{Appendix}

\subsection*{Pseudocode for the insert procedure}
Algorithm~\ref{alg:insert} details the pseudocode of the insertion procedure.  It takes in input the point $p$ to be inserted and the cover tree, represented by its root $r$.

\begin{algorithm}[h]
\scriptsize
\SetAlgoLined
\If{$\dist(p, r) > 2^{\ell_{\rm max}}$}{
	$\ell_{\rm max} = \lfloor \log_2 \dist(p, r) \rfloor$ \\
	$r.\mbox{level} = \ell_{\rm max}$ \\
}
let $\ell = \ell_{\rm max}$ \\
$Q_{\ell} = \{ r \}$ \\
\While{$Q_{\ell} \neq \emptyset$}{
	$Q_{\ell-1} = \emptyset$ \\
	\For{$q\in Q_{\ell}$}{
		\If{\rm $q.\mbox{children} == \emptyset$ OR $q.\mbox{children}[0].\mbox{level != } \ell-1$}{
                \If{$\dist(q, p) \leq 2^\ell$}{
                    $Q_{\ell-1} = Q_{\ell-1} \cup \{ q \}$ \\
                }
		}
		\Else{
			$Q_{\ell-1} = Q_{\ell-1} \cup \{ q' \in q.\mbox{children}$ s.t $\dist(p, q') \leq 2^{\ell} \}$ \\
		}
    	}
    	$\ell = \ell -1$ \\
}
\While{$\dist(p, Q_{\ell+1}) > 2^{\ell+1}$}{
	$\ell = \ell+1$ \\
}
let $v \in Q_{\ell+1}$ be s.t. $\dist(p, v) \leq 2^{\ell+1}$ \\
$u = $ new node with $u.\mbox{point}  = p$, $u.\mbox{level}  = \ell$ \\
\If{\rm $v.\mbox{children} ==\emptyset$ OR 
$v.\mbox{children}[0].\mbox{level != } \ell$}{
$w = $ new node with $w.\mbox{point} = v.\mbox{point}$,  $w.\mbox{level}  = \ell$, $w.\mbox{children}  = v.\mbox{children}$ \\
	$v.\mbox{children} = \{ u, w \}$ \\
}
\Else{
	$v.\mbox{children}  = v.\mbox{children}  \cup \{ u \}$ \\
}
$t = u$ \\
\While{\rm $t \neq$ null}{
	$t.\mbox{weight} = t.\mbox{weight}+1$ \\
	add $p$ to $t.\mbox{mis}$ if it remains an independent set \\
	$t = r.\mbox{parent}$ \\
}

\caption{\proc{Insert}(Point $p$, Root $r$)} \label{alg:insert}
\end{algorithm}

\subsection*{Pseudocode for the delete procedure}
Algorithm~\ref{alg:delete} details the pseudocode of the deletion procedure.  It takes in input the point $p$ to be deleted and the cover tree, represented by its root $r$.

\begin{algorithm}[h]
\scriptsize
\SetAlgoLined
let $\ell = \ell_{\rm max}$ \\
$Q_{\ell} = \{ r \}$ \\
\While{\rm true}{
	$Q_{\ell-1} = \emptyset$ \\
	\For{$q\in Q_{\ell}$}{
            \If{\rm $q.\mbox{point}  == p$ AND $q.\mbox{children}  == \emptyset$}{
                \textbf{break while} \\
            }
		\If{\rm $q.\mbox{children}  == \emptyset$ OR $q.\mbox{children}[0].\mbox{level != } \ell-1$}{
                \If{$\dist(q, p) \leq 2^\ell$}{
                    $Q_{\ell-1} = Q_{\ell-1} \cup \{ q \}$ \\
                }
		}
		\Else{
			$Q_{\ell-1} = Q_{\ell-1} \cup \{ q' \in q.\mbox{children}$ s.t $\dist(p, q') \leq 2^{\ell} \}$ \\
		}
        }
        $\ell = \ell -1$ \\
}
$R_\ell = \emptyset$ \\
\While{$\ell \leq \ell_{\rm max}-1$}{
    \If{\rm $\exists u \in Q_\ell$ s.t. $u.\mbox{point}  == p$ and $u.\mbox{level} == \ell$}{
        $v = u.\mbox{parent}$ \\
        delete $u$ from $v.\mbox{children}$ \\
        \If{\rm $v.\mbox{point}  == p$}{
            $R_\ell = R_\ell \cup v.\mbox{children}$ \\
        }
        \uElseIf{\rm $|v.\mbox{children}| == 1$}{
            $v.\mbox{children}  = v.\mbox{children}[0].\mbox{children}$ \\
            delete $v.\mbox{children}[0]$ \\
        }
    }
    let $R_{\ell+1} = \emptyset$ \\
    \For{$w \in R_\ell$}{
        \If{\rm $\exists w' \in Q_{\ell+1} \cup R_{\ell+1}$ s.t. $\dist(w, w') \leq 2^{\ell+1}$}{
            \If{\rm $w'.\mbox{children} \neq \emptyset$ AND $w'.\mbox{children}[0].\mbox{level} == \ell$}{
                $w'.\mbox{children}  = w'.\mbox{children} \cup \{ w \}$ \\
            }
            \Else{
                $z = $ new node with $z.\mbox{point}  = w'.\mbox{point}$, $z.\mbox{level}  = \ell$, \\
\hspace*{0.45cm} $z.\mbox{children}  = w'.\mbox{children}$ \\
                $w'.\mbox{children}  = \{ z, w \}$ \\
            }
        }
        \Else{
            $w.\mbox{level}  = \ell +1$\\
            $R_{\ell+1} = R_{\ell+1} \cup \{ w \}$ \\
        }
    }
    \For{$w \in Q_{\ell+1} \cup R_{\ell+1}$}{
        update $w.\mbox{weight}$ and $w.\mbox{mis}$ \\
    }
    $\ell = \ell + 1$
}
\If{\rm $p \neq r.\mbox{point}$}{
    \If{$R_\ell == \emptyset$}{
        $\ell_{\rm max} = r.\mbox{children}[0].\mbox{level} + 1$ \\
        $r.\mbox{level}  = \ell_{\rm max}$ \\
    }
    \Else{
        $r_{\rm new} = $ new root with $r_{\rm new}.\mbox{point} = r.\mbox{point}$, $r_{\rm new}.\mbox{level} = \ell + 1$, \\
\hspace*{0.85cm} $r_{\rm new}.\mbox{children} = \{ r \} \cup R_{\ell}$ \\ 
    }
}
\Else{
    \If{$|R_\ell| == 1$}{
        let $v = R_\ell[0]$ be the new root \\
        $\ell_{\rm max} = r$.children$[0]$.level $+ 1$ \\
        $v.\mbox{level} = \ell_{\rm max}$ \\
        update $v.\mbox{weight}$ and $v.\mbox{mis}$ \\
    }
    \Else{
        let $v \in R_\ell$ \\
        $r_{\rm new} = $ new root with $r_{\rm new}.\mbox{point} = v.\mbox{point}$, $r_{\rm new}.\mbox{level} = \ell + 1$, $r_{\rm new}.\mbox{children} = R_{\ell}$ \\ 
        $\ell_{\rm max} = \ell +1$ \\
        update $r_{\rm new}.\mbox{weight}$ and $r_{\rm new}.\mbox{mis}$ 
    }
}
\caption{\proc{Delete}(Point $p$, Root $r$)} \label{alg:delete}
\end{algorithm}
\end{document}